\newtheorem{lemma}{Lemma}[section]
\newtheorem{theorem}[lemma]{Theorem}
\newtheorem{proposition}[lemma]{Proposition}
\newtheorem{corollary}[lemma]{Corollary}
\theoremstyle{definition}
\newtheorem{definition}[lemma]{Definition}
\newtheorem{remark}[lemma]{Remark}
\newcommand{\CC}{{\mathbb C}}
\newcommand{\NN}{{\mathbb N}}
\newcommand{\RR}{{\mathbb R}}
\newcommand{\ZZ}{{\mathbb Z}}
\def\benm{\begin{enumerate}}            
\def\eenm{\end{enumerate}}              
\title{Three Dimensional Sums of Character Gabor Systems}
\date{}
\subjclass[2010]{11L40}
\keywords{Character sums, Legendre symbols, Compressive sensing}
\author{Kung-Ching Lin}
\address{Norbert Wiener Center\\
         Department of Mathematics \\
         University of Maryland \\
         College Park, MD 20742 \\
         USA}
\email{kclin@math.umd.edu}
\begin{document}

\newcounter{bean}

\begin{abstract}
	In deterministic compressive sensing, one constructs sampling matrices that recover sparse signals from highly incomplete measurements. However, the so-called square-root bottleneck limits the usefulness of such matrices, as they are only able to recover exceedingly sparse signals with respect to the matrix dimension. In view of the flat restricted isometry property (flat RIP) proposed by Bourgain et al., we provide a partial solution to the bottleneck problem with the Gabor system of Legendre symbols. When summing over consecutive vectors, the estimate gives a nontrivial upper bound required for the bottleneck problem.

\end{abstract}

\maketitle



\section{Introduction and Motivation}

	In this paper we discuss the following sum: Given a prime $p\in\NN$ and $n\in\ZZ/p\ZZ$, suppose that $M_1, M_2\subset \ZZ/p\ZZ$ are two sets of consecutive numbers with $|M_1|\leq |M_2|\leq\sqrt{p}$, we would like to estimate

\begin{equation}
\label{inner_prod_sum}
	|\sum_{k}\sum_{m_1\in M_1}\sum_{m_2\in M_2}\chi[k+m_1-m_2]\chi[k]e^{2\pi\imath kn/p}e^{-2\pi\imath m_2n/p}|,
\end{equation}
	where $\chi:\ZZ/p\ZZ\to\CC$ is a non-principal character.

	The sum in $\eqref{inner_prod_sum}$ is related to deterministic compressive sensing, character sums, and Weil's exponential sum estimates. From all prior works, one can easily derive an upper bound of $p^{3/2}$ for \eqref{inner_prod_sum}. However, as such an estimate is not sufficient for our purpose, we shall prove that it is possible to improve the estimate to $p^{3/2-\alpha}$ under certain mild assumptions, where $\alpha\in(0,1/2)$ depends on $|M_1|$ and $n$.

\subsection{Deterministic Compressive Sensing and Flat Restricted Isometry Property}

	Introduced in \cite{EC_TT_2006} and refined in \cite{EC_TT_2005}, the Restricted Isometry Property (RIP) is defined as follows: 
\begin{definition}
	An $n\times m$ matrix $A$ satisfies $(S,\delta_S)$-RIP if the following statement is true: Let $A_T$, $T\subset\{1,\dots,m\}$ be the $n\times|T|$ submatrix obtained by extracting the columns of $A$ which corresponds to the elements in $T$. Then for any subset $T$ with $|T|\leq S$ and any coefficient sequence $\{c_j\}_{j\in T}$, we have
\begin{equation}
	(1-\delta_S)\|c\|_2^2\leq\|A_T c\|_{2}^2\leq (1+\delta_S)\|c\|_2^2.
\end{equation}
\end{definition}

	For sampling schemes satisfying RIP, one is able to retrieve sparse signals efficiently from highly incomplete measurements because of the equivalence between the following optimization problems:
\begin{equation}
\tag{$P_0$}
\label{P_0}
	\min \|x\|_{\ell_0}\quad\text{subject to } Ax=b,
\end{equation}
	where $\|x\|_{\ell_0}$ denotes the number of nonzero entries of $x$, and
\begin{equation}
\tag{$P_1$}
\label{P_1}
	\min \|x\|_{\ell_1}\quad\text{subject to } Ax=b.
\end{equation}
	\eqref{P_0} and \eqref{P_1} do not yield the same solution in general, but for matrices satisfying RIP with small constant $\delta$, the two problems will be equivalent provided that the signal itself is sparse, \cite{EC_JR_TT_2006_1}. \eqref{P_0} is a non-convex optimization problem, whereas \eqref{P_1} is convex and is readily solvable. Thus, solving \eqref{P_1} is much more preferable to solving \eqref{P_0}.
	
	Using probabilistic estimates, one can show that given $\epsilon>0$, there exists a random matrix $A\in\CC^{M\times N}$ satisfies $(S,\delta_S)$-RIP with $M^{1-\epsilon}\ll S\ll M$ with exponentially high probability. However, deterministically one is not able to obtain such strong results: Very few methods are available other than the coherence estimate, and it is extremely hard to extend the order $S$ to $S\gg\sqrt{M}$. Such difficulty is denoted as the square-root bottleneck.
	
	Bourgain et al.\ \cite{JB_2011} proposed a new class of matrices satisfying RIP of high order, breaking the bottleneck by constructing a family of matrices satisfying $(S,\delta_S)$-RIP with $S\sim M^{1/2+\epsilon}$, where $\epsilon$ is of the order of $10^{-28}$. Mixon \cite{DM_2015} improved the $\epsilon$ to the order of $10^{-24}$, more than $8,000$ times better than the original result. One key ingredient of their proofs is the following notion of flat RIP. 
	
\begin{definition}[flat RIP]
Let $u_1,\dots,u_N$ be the columns of an $n\times N$ matrix $\Phi$. Suppose that for every $j$, $\|u_j\|_2=1$. $\Phi$ satisfies the $(k,\delta)$-flat RIP if for any disjoint $J_1, J_2\subset \{1,\dots, N\}$ with $|J_1|, |J_2|\leq k$ we have
\begin{equation}
\label{fRIP}
	|<\sum_{j\in J_1}u_j, \sum_{i\in J_2}u_i>|\leq\delta(|J_1||J_2|)^{1/2}.
\end{equation} 
\end{definition}

	For the theory of deterministic compressive sensing, the coherence parameter $\mu$ of the given matrix is important:
\begin{definition}
	Given a matrix $\Phi=(\phi_1\mid\phi_2\mid\dots\mid\phi_r)$ with unit column vectors, the coherence parameter $\mu$ of $\Phi$ is defined to be
\[
	\mu:=\max_{i\neq j}|{<}\phi_i,\phi_j{>}|.	
\] 

\end{definition}
	The following lemma takes a slightly weaker form of flat RIP.
\begin{lemma}
\label{fRIP_RIP}
	Let $k\geq 2^{10}$ and $s$ be any positive integer. Assume that the coherence parameter of $\Phi$ is $\mu\leq1/k$, and for some $\delta$ and any disjoint $J_1,J_2$ with $|J_1|,|J_2|\leq k$, one has
\begin{equation}
\label{relaxed_fRIP}
	\bigg|{<}\sum_{j_1\in J_1}u_{j_1},\sum_{j_2\in J_2}u_{j_2}{>}\bigg|\leq\delta k,
\end{equation}
	then $\Phi$ satisfies RIP of order $(2sk, 44s\delta\log k)$-RIP.
\end{lemma}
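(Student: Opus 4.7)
The plan is to adapt the dyadic pigeonholing argument underlying the Bourgain--Dilworth--Ford--Konyagin--Kutzarova approach~\cite{JB_2011}, as refined in~\cite{DM_2015}. Fix any $T \subset \{1,\dots,N\}$ with $|T| \le 2sk$ and any unit vector $c$ supported on $T$; since
\[
\|\Phi_T c\|_2^2 - \|c\|_2^2 = \sum_{\substack{i,j \in T \\ i \ne j}} c_i \overline{c_j}\langle u_i, u_j\rangle,
\]
it suffices to bound this off-diagonal bilinear form by $44 s \delta \log k$. First I would dyadically layer the support as $T = \bigsqcup_{r \ge 0} T_r$ with $T_r := \{i \in T : 2^{-r-1} < |c_i| \le 2^{-r}\}$. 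The $\ell^2$ normalization forces $|T_r| \le 4\cdot 4^r$, and trivially $|T_r| \le 2sk$; both constraints will be invoked depending on whether $r$ is small or large.

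For each level $r$ below a cutoff $R \sim \log_2 k$, I would subdivide $T_r$ into blocks of size at most $k$, and further refine each block into $O(1)$ sub-blocks on which $\arg(c_i)$ lies in a single quadrant, so that on each sub-block the coefficient $c_i$ is well approximated by $2^{-r}e^{i\theta}$ for a common phase $\theta$. For two disjoint sub-blocks $A$ at level $r$ and $B$ at level $r'$, the contribution $\sum_{i\in A,\,j\in B} c_i\overline{c_j}\langle u_i,u_j\rangle$ is expressible as a finite linear combination of signed indicator sums to which~\eqref{relaxed_fRIP} directly applies, yielding a per-pair bound of the shape $C\cdot 2^{-r-r'}\delta k$. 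Intra-block interactions are dispatched by the coherence hypothesis $\mu\le 1/k$. The deep tail $r > R$ has $|c_i|<1/(2k)$, so its contribution is controlled by coherence alone: $|\sum_{r>R}\cdots|\lesssim \mu\|c\|_1^2 \lesssim |T|/k = O(s)$, which is absorbed by the target bound.

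To aggregate the head-level contributions into the shape $44s\delta\log k$, the key estimate is a Cauchy--Schwarz bound on $\sum_{r\le R} n_r 2^{-r}$, where $n_r=\lceil |T_r|/k\rceil$ counts blocks at level $r$. The constraint $|T_r|\le 4^{r+1}$ keeps $n_r 2^{-r}$ small for small $r$, while $|T_r|\le 2sk$ keeps $n_r\le 2s$ for large $r$; the factor $s$ emerges from the total block count $\sum_r n_r\lesssim s+\log k$, and the $\log k$ factor from aggregating across the $R$ active levels against geometric weights in $2^{-r}$. Summing the per-pair bound $C\cdot 2^{-r-r'}\delta k$ against these weights produces the desired $O(s\delta\log k)$ head estimate.

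The main obstacle will be the bookkeeping required to land on the specific constant $44$ rather than an unoptimized $O(s\delta\log k)$. Two subtler points deserve care: first, the phase-sector splitting, which converts complex-coefficient bilinear forms into a bounded number of signed-indicator sums compatible with~\eqref{relaxed_fRIP}; second, the interplay between the two constraints on $|T_r|$, which must both be invoked at the appropriate ranges of $r$ to prevent a spurious factor of $k$ from appearing in the per-pair summation. Once the split point $R$ and the Cauchy--Schwarz are optimized jointly, the stated bound falls out.
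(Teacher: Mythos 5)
You should first be aware that the paper does not prove Lemma \ref{fRIP_RIP} at all: it is imported as background from Bourgain et al.\ \cite{JB_2011} (see also \cite{DM_2015}), so there is no internal proof to match, and your proposal has to stand on its own. Its overall shape (dyadic level sets $T_r$, blocks of size $\le k$, coherence for the deep tail) is indeed the shape of the literature argument, but the decisive step is asserted rather than carried out, and as you have set it up it cannot deliver the bound. With the per-pair estimate $C\,2^{-r-r'}\delta k$ and $n_r=\lceil |T_r|/k\rceil$ blocks at level $r$, your head contribution is $\delta k\bigl(\sum_{r\le R}2^{-r}n_r\bigr)^2$, and since $n_r\ge 1$ for every occupied level this is at best $O(\delta k)$ — far larger than $44s\delta\log k$ whenever $k\gg s\log k$. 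The loss comes exactly from applying the uniform bound $\delta k$ to blocks much smaller than $k$ (a single heavy coefficient against one flat level already saturates it), and neither of your two constraints on $|T_r|$ repairs this. Two further steps are not sound as written: replacing $c_i$ by $2^{-r}e^{i\theta}$ is only an approximation whose error is of the same order as the main term (the standard argument freezes coefficients exactly, via splitting into real/imaginary positive/negative parts, convexity — extreme points of $[0,1]$-coefficient bilinear forms are indicator vectors — and a random-splitting identity to pass from one support set to disjoint pairs); and an additive, $\delta$-independent tail term of size $O(s)$ (or even $O(s^2/k)$ plus head--tail cross terms of size about $s^{3/2}k^{-1/2}$) cannot be ``absorbed'' by a target of size $44s\delta\log k$, because the statement imposes no lower bound on $\delta$.

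The deeper issue is that with the hypothesis in the uniform form $|\langle\sum_{j\in J_1}u_j,\sum_{j\in J_2}u_j\rangle|\le\delta k$, the conclusion $(2sk,\,44s\delta\log k)$ is false for large $k$, so no bookkeeping can close the argument. Take $u_0=e_0$ and $u_j=k^{-1}e_0+\sqrt{1-k^{-2}}\,e_j$ for $1\le j\le k$, with all remaining columns orthonormal and orthogonal to these. Then $\mu=1/k$, and every pair of disjoint sets of size at most $k$ gives a sum of magnitude at most $2$, so $\delta=2/k$; yet the unit vector with $c_0=1/\sqrt2$ and $c_j=1/\sqrt{2k}$ (sparsity $k+1\le 2k$, i.e.\ $s=1$) satisfies $\|\Phi c\|_2^2\ge 1+k^{-1/2}$, whence $\delta_{2k}\ge k^{-1/2}\gg 88\,k^{-1}\log k$ once $\sqrt{k}\gg 88\log k$. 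The version Bourgain et al.\ actually prove uses flat RIP in the normalized form $\delta(|J_1||J_2|)^{1/2}$ of the paper's own definition: under that hypothesis singleton pairs force $\delta\ge\mu$, which is what allows the coherence/tail terms to be absorbed, and the per-pair bound becomes $\delta\,2^{-r-r'}(|T_r||T_{r'}|)^{1/2}$, for which the dyadic bookkeeping genuinely produces a single $\log k$. So to have a correct proof you must either work under the $(|J_1||J_2|)^{1/2}$ hypothesis or add the side conditions tying $\delta$ to $k$ under which the $\delta k$ form is legitimate — and the same correction is needed in the lemma as transcribed in the paper.
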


	By Lemma \ref{fRIP_RIP}, matrices satisfying flat RIP also satisfy RIP of high order, which provides insights on how to approach this problem from a new direction.\par

	Motivated by this, we aim to construct deterministic matrices with bottleneck-breaking RIP from the Gabor system of Legendre symbols. Our formulation follows from \eqref{relaxed_fRIP}: given a prime $p\in\NN$, consider $\{u_{l,j}\}_{l,j\in\ZZ/p\ZZ}\subset\CC^p$ where $u_{l,j}[k]=\frac{1}{\sqrt{p}}\chi[k-l]e^{-2\pi\imath kj/p}$ with $\chi$ being the Legendre symbol. Fix disjoint $\Omega_1,\Omega_2\subset\ZZ/p\ZZ\times\ZZ/p\ZZ$ where $|\Omega_1|,|\Omega_2|\leq\sqrt{p}$, define $\pi_2(\Omega_i)=\{j\in\ZZ/p\ZZ: \exists l\in\ZZ/p\ZZ \text{ such that } (l,j)\in\Omega_i\}$ and $\Omega_i(j)=\{l\in\ZZ/p\ZZ: (l,j)\in M_i\}$ for $i=1,2$. Then,
\begin{equation}
\label{original_sum}
\begin{split}
	&\bigg|{<}\sum_{(m_1,n_1)\in\Omega_1}u_{m_1,n_1},\sum_{(m_2,n_2)\in\Omega_2}u_{m_2,n_2}{>}\bigg|\\
	&=\bigg|\frac{1}{p}\sum_{n_1\in\pi_2(\Omega_1)}\sum_{n_2\in\pi_2(\Omega_2)}\sum_{k\in\ZZ/p\ZZ}\sum_{m_1\in\Omega_1(n_1)}\sum_{m_2\in\Omega_2(n_2)}\chi[k+m_1-m_2]\chi[k]e^{2\pi\imath k(n_1-n_2)/p}e^{-2\pi\imath m_2(n_1-n_2)/p}\bigg|\\
	&\leq\frac{1}{p}\sum_{n_1\in\pi_2(\Omega_1)}\sum_{n_2\in\pi_2(\Omega_2)}\bigg|\sum_{k\in\ZZ/p\ZZ}\sum_{m_1\in\Omega_1(n_1)}\sum_{m_2\in\Omega_2(n_2)}\chi[k+m_1-m_2]\chi[k]e^{2\pi\imath k(n_1-n_2)/p}e^{-2\pi\imath m_2(n_1-n_2)/p}\bigg|.
\end{split}
\end{equation}
	Note that the expression in inside the final absolute value of \eqref{original_sum} is exactly \eqref{inner_prod_sum} when $\Omega_1(n_1),\Omega_2(n_2)$ are consecutive numbers. In order to use Lemma \ref{fRIP_RIP}, we aim to show that \eqref{inner_prod_sum} is less than $p^{3/2-\alpha}$ for some $\alpha>0$.

\subsection{Character Sum Estimates}	

	Besides the practical interests in compressive sensing, estimation of character sums is also intriguing in its own. Let $\chi:\ZZ/p\ZZ\to\CC$ be a non-principal character on $(\ZZ/p\ZZ)^\ast$ with the extension $\chi[0]=0$. Polya-Vinogradov inequality states that

\[
	|\sum_{M\leq k\leq M+N}\chi[k]|\leq\sqrt{p}\log p
\]

	for any arbitrary $M, N$. Chung \cite{FC_1994} investigated the cancellation within the sum

\[
	\sum_{a\in S}\sum_{b\in T}\chi[a+b]
\]
	where $S, T\subset\ZZ/p\ZZ$. In particular, the following estimate is given:
\[
	|\sum_{a\in S}\sum_{b\in T}\chi[a+b]|\leq\sqrt{p|S||T|}(1-\frac{|S|}{p})^{1/2}(1-\frac{|T|}{p})^{1/2}.
\]

	Note that the estimate is only nontrivial for $|S|, |T|\gg\sqrt{p}$. Chung also commented on a conjecture for the case $|S|\ll \sqrt{p}$: for any fixed $\epsilon>0$ and $|S|>p^\epsilon$, there exists $\delta>0$ such that
\[
	|\sum_{a,b\in S}\chi[a-b]|<|S|^{2-\delta}.
\]
	
	Friedlander and Iwaniec \cite{JF_HI_1993} gave a partial answer to the conjecture above, proving the inequality when $S$ is contained in an interval $I$ of length $\ll \sqrt{p}$ and satisfies $|S|\geq I^{r/(r+1)}p^{1/4r+\epsilon}$ for some $r\geq2$ using the Burgess estimate. Note that the results here do not apply to \eqref{inner_prod_sum} even if $\Omega_1(n_1)=\Omega_2(n_2)$, since there is an additional summation over $\ZZ/p\ZZ$.

\subsection{Weil's Exponential Sum Estimate}
	Using Weil's estimate, one has the following inequalities \cite{WS_2006, AB_MF_DM_JM_2016, AW_1948, JB_RB_JW_2012}:
\begin{theorem}
	Given a prime $p$ with $0<d_1<\dots<d_k<p$, one has
\[
	|\sum_{n=0}^{p-1}\chi[n+d_1]\cdots\chi[n+d_k]|\leq 9kp^{1/2}.	
\]
\end{theorem}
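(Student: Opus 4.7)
The plan is to rewrite the product of character values as the character of a single polynomial and then apply Weil's classical bound.

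Concretely, set $f(x) = \prod_{i=1}^{k}(x+d_i) \in \FF_p[x]$. Since $0 < d_1 < \cdots < d_k < p$, the residues $-d_1,\ldots,-d_k$ are pairwise distinct, so $f$ is squarefree of degree exactly $k$. Using the multiplicativity of $\chi$ on $\FF_p^\ast$ together with the convention $\chi[0]=0$, one verifies that for every $n \in \FF_p$,
\[
\chi[n+d_1]\chi[n+d_2]\cdots\chi[n+d_k] = \chi(f(n)),
\]
with both sides vanishing simultaneously at any $n \equiv -d_i \pmod{p}$. Hence the sum in the statement equals $\sum_{n=0}^{p-1}\chi(f(n))$.

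Next I would invoke Weil's theorem: if $\chi$ has order $m \geq 2$ and $f \in \FF_p[x]$ of degree $d$ is not of the form $c \cdot g(x)^m$ in $\overline{\FF}_p[x]$, then
\[
\bigg|\sum_{n \in \FF_p}\chi(f(n))\bigg| \leq (d-1)\, p^{1/2}.
\]
The hypothesis holds in our setting because $f$ has at least one simple root, so it cannot be a nontrivial $m$-th power for any $m \geq 2$. Applying the bound with $d=k$ yields
\[
\bigg|\sum_{n=0}^{p-1}\chi[n+d_1]\cdots\chi[n+d_k]\bigg| \leq (k-1)\, p^{1/2} \leq 9k\, p^{1/2},
\]
which is the desired estimate. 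The constant $9$ is highly nonoptimal; it is retained to match the uniform form of the bound used in the cited references \cite{WS_2006, AB_MF_DM_JM_2016, AW_1948, JB_RB_JW_2012}, some of which state Weil's inequality in variants that pick up slightly larger constants but remain linear in $k$.

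The only genuine difficulty is Weil's bound itself, whose proof uses the Riemann hypothesis for curves over finite fields; I would simply quote it as a black box. The mild bookkeeping point is verifying the squarefreeness/non-$m$-th-power hypothesis, which here is immediate from the assumption that the $d_i$ are distinct residues modulo $p$.
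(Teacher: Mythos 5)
Your derivation is correct and is essentially the standard route: the paper does not prove this statement but quotes it from the cited references, which obtain it exactly as you do, by writing the product as $\chi(f(n))$ for the squarefree polynomial $f(x)=\prod_i(x+d_i)$ and invoking Weil's bound for character sums of polynomial arguments. Your conclusion $(k-1)p^{1/2}$ is in fact sharper than the stated $9kp^{1/2}$; the factor $9$ merely reflects the cruder uniform variants of Weil's inequality used in those references, so nothing is missing.
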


\begin{theorem}
	Given a prime $p$ and $m,n\in\ZZ/p\ZZ\backslash\{0\}$, one has
\[
	\bigg|\sum_{k\in\ZZ/p\ZZ}\chi[k]\chi[k+m]e^{-2\pi\imath kn/p}\bigg|\leq2\sqrt{p}.
\]
\end{theorem}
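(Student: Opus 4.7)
My plan is to reduce the sum to a classical Kloosterman sum via Gauss sum manipulations, and then invoke the Weil bound, which naturally produces the constant $2$. The first step is to combine the two Legendre factors using multiplicativity: for every $k$, $\chi[k]\chi[k+m] = \chi[k^2+mk]$. The target thus becomes the mixed character sum $S = \sum_k \chi[k^2+mk]\, e^{-2\pi\imath kn/p}$ involving the quadratic polynomial $f(k) = k^2+mk$.

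Next I would invoke the Fourier-type identity for the Legendre symbol, namely $\chi[y] = \tau(\chi)^{-1}\sum_t \chi[t]\, e^{2\pi\imath ty/p}$, where $\tau(\chi)$ is the Gauss sum with $|\tau(\chi)| = \sqrt{p}$. Substituting $y = k^2+mk$ and swapping the order of summation rewrites $S$ as a sum over $t \neq 0$ of $\chi[t]$ times an inner quadratic Gauss sum $\sum_k e^{2\pi\imath (tk^2+(tm-n)k)/p}$. For $t \neq 0$ (recall $p$ is odd), completing the square in $k$ evaluates the inner sum as $\chi[t]\tau(\chi)\, e^{-2\pi\imath (tm-n)^2/(4tp)}$. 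The two factors of $\chi[t]$ collapse to $1$ and $\tau(\chi)$ cancels the prefactor, leaving
\begin{equation*}
S = \sum_{t \neq 0} e^{-2\pi\imath (tm-n)^2/(4tp)}.
\end{equation*}

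Expanding $(tm-n)^2/(4t) = m^2 t/4 - mn/2 + n^2/(4t)$, the cross term factors out as a constant phase and the remaining sum is a Kloosterman sum:
\begin{equation*}
S = e^{\pi\imath mn/p}\, K\!\left(-\tfrac{m^2}{4},\, -\tfrac{n^2}{4};\, p\right),
\end{equation*}
where $K(a,b;p):=\sum_{t \neq 0} e^{2\pi\imath (at+b/t)/p}$. Since $m, n \not\equiv 0 \pmod{p}$, both Kloosterman parameters are nonzero modulo $p$, so the Weil bound $|K(a,b;p)| \leq 2\sqrt{p}$ applies and yields $|S| \leq 2\sqrt{p}$.

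The main obstacle is the Weil bound on Kloosterman sums itself, which is a deep consequence of the Riemann hypothesis for curves over $\FF_p$. The preceding reduction, by contrast, is routine Gauss sum bookkeeping, and it is precisely because Weil's bound produces a Kloosterman constant of exactly $2$ that the stated estimate $2\sqrt{p}$ emerges with no slack.
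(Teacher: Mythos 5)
Your reduction is correct, and it is worth noting that the paper itself does not prove this statement: it is quoted from the literature on Weil-type estimates, so there is no internal argument to compare against, and your derivation stands as a complete, standard proof of the quoted bound. The chain $\chi[k]\chi[k+m]=\chi[k^2+mk]$, the Fourier expansion of the Legendre symbol against additive characters, the evaluation of the inner quadratic Gauss sum by completing the square, and the collapse $\chi[t]^2=1$ all check out, and the resulting sum is indeed a Kloosterman sum with parameters $-\overline{4}m^2$ and $-\overline{4}n^2$, both nonzero modulo $p$, so Weil's bound delivers exactly $2\sqrt{p}$. Two points to tidy: expressions such as $(tm-n)^2/(4t)$ and $mn/2$ must be read with $4t$ and $2$ replaced by their inverses modulo $p$, so the constant phase is $e^{2\pi\imath\,\overline{2}mn/p}$ rather than a literal $e^{\pi\imath mn/p}$; and the step $\chi[t]^2=1$ uses that $\chi$ is the quadratic (Legendre) character --- for a general non-principal character the same manipulation leaves a multiplicative twist and produces a Sali\'{e}-type sum, which still obeys the $2\sqrt{p}$ bound but requires a separate appeal to Weil. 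In the paper's setting $\chi$ is the Legendre symbol, so your proof applies as written; the alternative route implicit in the cited references is to apply Weil's bound for mixed character sums $\sum_k\chi(f(k))e^{2\pi\imath g(k)/p}$ directly with $\deg f=2$ and $\deg g=1$, which yields the same constant $2$ without the Kloosterman detour, whereas your version isolates the deep input (the Weil bound for Kloosterman sums) in a single classical statement at the cost of some Gauss-sum bookkeeping.
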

	
	In particular, the sum \eqref{inner_prod_sum} has the trivial estimate $\sqrt{p}|M_1||M_2|$. When $|M_1|, |M_2|\sim \sqrt{p}$, we will have that \eqref{inner_prod_sum}$\leq p^{3/2}$.

	In our case, the summation is three dimensional, complicating the issue. However, we shall show that if we add sufficiently large spins on the sum, there are indeed additional cancellations occurring.

\section{Main Results}

\begin{theorem}
\label{auto_corr_main}
	Let $p$ be a prime, and $n\in\ZZ/p\ZZ$. Suppose $n\sim p^{1/2+\delta}$, where $\delta\in(0,1/2)$, and $M_1, M_2\subset\ZZ/p\ZZ$ consist of consecutive numbers such that $|M_1|,|M_2|\leq\sqrt{p}$. Furthermore, if $|M_2|/|M_1|, |M_1|$ are even, and $|M_1|\sim p^{1/2-\sigma}$, $\sigma\in[0,1/2)$ such that $\delta>\sigma$, then
\begin{equation}
\label{sine_estimate}
	\sum_{s\neq 0, -n}|\frac{\sin(\pi|M_1|s/p)}{\sin(\pi s/p)}||\frac{\sin(\pi|M_2|(s+n)/p)}{\sin(\pi (s+n)/p)}|=O(p^{3/2-\alpha}),
\end{equation}
	where $\alpha=\sigma+(\delta-\sigma)/2$, and the big-O notation $A(p)=O(p^{3/2-\alpha})$ means that there exists a constant $K$, independent of $p$, such that $\limsup_{p:prime}\frac{A(p)}{p^{3/2-\alpha}}\leq K$.
\end{theorem}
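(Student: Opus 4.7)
The key tool is the Dirichlet-kernel bound
\[
\Big|\tfrac{\sin(\pi k x/p)}{\sin(\pi x/p)}\Big|\leq \min\!\big(k,\; p/(2\|x\|_p)\big), \qquad x\not\equiv 0\pmod p,
\]
where $\|x\|_p$ denotes the distance of $x$ to $p\ZZ$; apply it with $k=a:=|M_1|$ on the first factor and $k=b:=|M_2|$ on the second. Write $A=\lceil p/a\rceil\sim p^{1/2+\sigma}$ and $B=\lceil p/b\rceil\geq p^{1/2}$, parametrise $s$ by its representative in $(-p/2,p/2]\setminus\{0,-n\}$, and partition the sum into four regions according to whether $\|s\|_p\leq A$ and whether $\|s+n\|_p\leq B$. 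The ``both small'' region is empty for large $p$: it would force $n=\|n\|_p\leq A+B\ll p^{1/2+\sigma}$, contradicting $n\sim p^{1/2+\delta}$ with $\delta>\sigma$. In the mixed region $\|s\|_p\leq A$, $\|s+n\|_p>B$, the quantity $\|s+n\|_p$ is comparable to $n$ uniformly in $s$, so the contribution is bounded by $a\cdot(2A+1)\cdot(p/n)\sim p^2/n\sim p^{3/2-\delta}$. The symmetric mixed region gives the same bound.

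The substantive case is the ``both large'' region, bounded by $\sum p^2/(4\|s\|_p\|s+n\|_p)$. Writing $r=-s$ when $s<0$, partition the surviving $s$ into (i) $A<s\leq p/2-n$, where $s+n$ needs no reduction; (ii) $A<r<n-B$, where $s+n=n-r$; and (iii) $n+B<r<p/2$, where $s+n=-(r-n)$. The boundary sliver $s\in(p/2-n,p/2]$ contains $O(n)$ terms each of size $O(1/p^2)$ and is negligible. On (i)--(iii) apply the partial fractions
\[
\tfrac{1}{u(u+n)}=\tfrac{1}{n}\!\big(\tfrac1u-\tfrac1{u+n}\big),\qquad \tfrac{1}{r(n-r)}=\tfrac1n\!\big(\tfrac1r+\tfrac1{n-r}\big),
\]
telescope, and compare to integrals; each piece contributes at most $\tfrac{1}{n}\cdot O(\log p)$ via the exponents $n/A=p^{\delta-\sigma}$ and $n/B\leq p^{\delta}$. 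Hence the ``both large'' region totals $O((p^2/n)\log p)=O(p^{3/2-\delta}\log p)$.

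Summing the three nontrivial regions yields $O(p^{3/2-\delta}\log p)$. Since $\alpha=\sigma+(\delta-\sigma)/2=(\sigma+\delta)/2<\delta$, the ratio $(p^{3/2-\delta}\log p)/p^{3/2-\alpha}=p^{-(\delta-\sigma)/2}\log p$ tends to $0$ as $p\to\infty$, so the full sum is $O(p^{3/2-\alpha})$ as claimed. The only delicate point is the bookkeeping for the ``both large'' region---correctly identifying the subcases by how $s+n$ reduces modulo $p$ and dispatching the boundary sliver near $s=p/2$---after which the argument is a direct application of the $\min(k,p/(2\|x\|_p))$ bound. I note that a Cauchy--Schwarz attempt only gives $p\sqrt{ab}=p^{3/2-\sigma/2}$, which is weaker than the target; the improvement here comes from exploiting the fact that the two Dirichlet kernels are peaked at points separated by $\|n\|_p\gg A+B$.
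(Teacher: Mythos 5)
Your argument is correct, and it takes a genuinely different and more elementary route than the paper. The paper replaces each sine quotient by a piecewise-linear majorant/minorant, decomposes the range into $x_i$- and $y_j$-intervals, and extracts sign cancellation from the alternation of $\sin(\pi|M_2|(s+n)/p)$ across consecutive $y_j$-intervals --- this is exactly where the hypotheses that $|M_1|$ and $|M_2|/|M_1|$ are even enter --- and then balances a family of error terms against a free cutoff $p^{\epsilon}$, optimizing to $\alpha=\sigma+(\delta-\sigma)/2$. You use no cancellation at all: only the pointwise bound $|\sin(\pi k x/p)/\sin(\pi x/p)|\leq\min(k,\,p/(2\|x\|_p))$ together with the fact that the two Dirichlet peaks, at $s\equiv 0$ and $s\equiv -n$, are separated by $n\sim p^{1/2+\delta}\gg p/|M_1|\geq p/|M_2|$ (this is where $\delta>\sigma$ and $|M_2|\geq|M_1|$ are used, the latter guaranteed since $|M_2|/|M_1|$ is a positive even integer). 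Your four-region bookkeeping is sound: the two peak regions each give $O(p^2/n)$, the ``both small'' region is empty for large $p$, and the partial-fraction/telescoping estimates in the ``both large'' region give $O(p^2 n^{-1}\log p)$, so the whole sum is $O(p^{3/2-\delta}\log p)$, which is in fact \emph{stronger} than the stated $O(p^{3/2-\alpha})$ since $\alpha=(\sigma+\delta)/2<\delta$, and it dispenses with the parity hypotheses entirely. What each approach buys: yours is shorter, needs fewer assumptions, and wins purely from peak separation; the paper's cancellation machinery is the kind of tool one would need if the alternating-sign structure, rather than the separation $n\gg p/|M_1|$, had to carry the estimate (its surviving terms such as $O(p^{3/2-\sigma-\epsilon})$ reflect that regime), but for the theorem as stated your route suffices. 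One small slip in your write-up: in the sliver $s\in(p/2-n,\,p/2]$ it is $1/(\|s\|_p\|s+n\|_p)$ that is $O(p^{-2})$; the actual summands there are $O(1)$, so the sliver contributes $O(n)=O(p^{1/2+\delta})$, which is still negligible since $\delta<1/2$, so the conclusion is unaffected.
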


	From this theorem, we derive the following corollaries:

\begin{corollary}
\label{auto_corr_dirichlet}
	With the assumptions above, we have
\begin{equation}
\label{auto_est}
	|\sum_{k}\sum_{m_1\in\Omega_1(n_1)}\sum_{m_2\in\Omega_2(n_2)}\chi[k+m_1-m_2]\chi[k]e^{2\pi\imath kn/p}e^{-2\pi\imath m_2n/p}|=O(p^{3/2-\alpha}),
\end{equation}
	where $n=n_1-n_2$.
\end{corollary}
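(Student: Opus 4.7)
The plan is to apply Fourier inversion on $\ZZ/p\ZZ$ to decouple the triple sum over $k,m_1,m_2$ into a product of a Gauss-type character sum and two truncated geometric sums. Writing $\chi[k+m_1-m_2]=\sum_{s\in\ZZ/p\ZZ}\hat\chi(s)\,e^{2\pi\imath s(k+m_1-m_2)/p}$ and interchanging the order of summation, the left-hand side of \eqref{auto_est} becomes
\begin{equation*}
\bigg|\sum_{s\in\ZZ/p\ZZ}\hat\chi(s)\,G(s+n)\,D_1(s)\,D_2(-(s+n))\bigg|,
\end{equation*}
where $G(t):=\sum_{k}\chi[k]\,e^{2\pi\imath kt/p}$ is a Gauss-type sum and $D_i(t):=\sum_{m\in\Omega_i(n_i)}e^{2\pi\imath mt/p}$ is a truncated geometric series.

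Next, I will invoke standard Gauss-sum identities for the Legendre symbol: $\hat\chi(0)=0$ since $\chi$ is non-principal, and $|\hat\chi(s)|=p^{-1/2}$ for $s\neq 0$; likewise $G(0)=0$ and $|G(t)|=\sqrt{p}$ for $t\neq 0$. Hence the two Fourier modes $s=0$ and $s=-n$ contribute nothing, while for every other $s$ the product $|\hat\chi(s)\,G(s+n)|$ equals exactly $1$.

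Because $\Omega_1(n_1)$ and $\Omega_2(n_2)$ consist of $|M_1|$ and $|M_2|$ consecutive integers, the remaining two geometric sums have closed-form moduli
\begin{equation*}
|D_1(s)|=\bigg|\frac{\sin(\pi|M_1|s/p)}{\sin(\pi s/p)}\bigg|,\qquad |D_2(-(s+n))|=\bigg|\frac{\sin(\pi|M_2|(s+n)/p)}{\sin(\pi(s+n)/p)}\bigg|.
\end{equation*}
Applying the triangle inequality reduces \eqref{auto_est} to exactly the expression on the left-hand side of \eqref{sine_estimate}, and Theorem \ref{auto_corr_main} then yields the $O(p^{3/2-\alpha})$ bound.

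Since Theorem \ref{auto_corr_main} absorbs all the substantive analytic difficulty, the corollary is essentially a routine Fourier-analytic reduction rather than a new estimate; the only point requiring care is the correct bookkeeping of the two vanishing Fourier modes $s=0,-n$, which is precisely what matches the summation range used in \eqref{sine_estimate}.
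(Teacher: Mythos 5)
Your proposal is correct and follows essentially the same route as the paper: the paper likewise Fourier-expands $\chi[k+m_1-m_2]$, factors the triple sum into a Gauss-sum factor times the two geometric (Dirichlet-kernel) sums over $\Omega_1(n_1)$ and $\Omega_2(n_2)$, drops the degenerate modes $s=0,-n$, and reduces to \eqref{sine_estimate} via Theorem \ref{auto_corr_main}. The only cosmetic difference is that the paper uses the self-duality of the Legendre symbol to write the Gauss-sum factor explicitly as $\chi[s]\chi[n+s]$, whereas you bound it by $|\hat\chi(s)\,G(s+n)|=1$, which is the same estimate stated slightly more generally.
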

	
\begin{corollary}
\label{inferior_one}
	With the same assumptions above, we have, for a fixed $k\in\ZZ/p\ZZ$,
\[
	|\sum_{m_1\in\Omega_1(n_1)}\sum_{m_2\in\Omega_2(n_2)}\chi[k+m_1-m_2]e^{2\pi\imath m_2n/p}|=O(p^{1-\alpha}).
\]
\end{corollary}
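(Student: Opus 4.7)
The plan is to deduce Corollary \ref{inferior_one} from Theorem \ref{auto_corr_main} by Fourier inversion applied to the Legendre symbol. Writing $S(k)$ for the inner sum, I substitute $\chi[k+m_1-m_2] = \frac{1}{p}\sum_{s\in\ZZ/p\ZZ} \hat{\chi}(s)e^{2\pi\imath s(k+m_1-m_2)/p}$, where $\hat{\chi}(s) = \sum_x \chi[x]e^{-2\pi\imath sx/p}$. Because $\chi$ is non-principal, $\hat{\chi}(0) = 0$, and the Gauss sum identity gives $|\hat{\chi}(s)| = \sqrt{p}$ for every $s \neq 0$. The double sum over $m_1, m_2$ then factors, and since $M_i = \Omega_i(n_i)$ consist of consecutive numbers, the two factors are Dirichlet-type kernels $D_{M_i}(t) := \sum_{m\in M_i} e^{2\pi\imath tm/p}$ satisfying $|D_{M_i}(t)| = |\sin(\pi|M_i|t/p)/\sin(\pi t/p)|$ for $t\neq 0$.

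After applying the triangle inequality, the phase $e^{2\pi\imath sk/p}$ is absorbed and the bound becomes uniform in $k$:
\[
|S(k)| \;\leq\; \frac{1}{\sqrt{p}}\sum_{s\neq 0}|D_{M_1}(s)|\,|D_{M_2}(n-s)|.
\]
The change of variable $s \mapsto -s$, together with $|D_M(-s)| = |D_M(s)|$, converts the right-hand side into $\frac{1}{\sqrt{p}}\sum_{s \neq 0}|D_{M_1}(s)|\,|D_{M_2}(n+s)|$, whose summand matches the left-hand side of \eqref{sine_estimate}. Theorem \ref{auto_corr_main} additionally excludes the index $s = -n$, so I split off that single term: at $s = -n$ one has $|D_{M_2}(0)| = |M_2| \leq \sqrt{p}$ and $|D_{M_1}(-n)| \leq |M_1| \leq \sqrt{p}$, giving a contribution of at most $p/\sqrt{p} = \sqrt{p}$. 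Invoking Theorem \ref{auto_corr_main} on the remaining sum yields $O(p^{3/2-\alpha})/\sqrt{p} = O(p^{1-\alpha})$, and since $\alpha < 1/2$ this dominates the $\sqrt{p}$ boundary contribution.

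The substantive work has already been carried out in Theorem \ref{auto_corr_main}; this corollary merely strips off the outer summation over $k$ at the cost of a Gauss-sum factor of $\sqrt{p}$. The only place that requires care is matching the Dirichlet-kernel product to the form in \eqref{sine_estimate}, which is handled by the elementary substitution $s \mapsto -s$, and verifying that the single excluded index $s = -n$ contributes only an $O(\sqrt{p})$ error that is absorbed into the $O(p^{1-\alpha})$ main term. I therefore do not anticipate any substantive obstacle beyond this bookkeeping.
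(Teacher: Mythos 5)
Your argument is correct and is essentially the paper's own route: the paper proves Corollary \ref{inferior_one} ``verbatim'' as in Corollary \ref{auto_corr_dirichlet}, i.e.\ by expanding $\chi[k+m_1-m_2]$ via the Gauss-sum/Fourier identity, factoring out the two Dirichlet kernels, and invoking Theorem \ref{auto_corr_main}, with the single $1/\sqrt{p}$ factor (no longer cancelled by a sum over $k$) giving $O(p^{1-\alpha})$ instead of $O(p^{3/2-\alpha})$. Your explicit treatment of the $s=-n$ term (bounded by $|M_1||M_2|/\sqrt{p}\leq\sqrt{p}$, which no longer vanishes automatically since the factor $\chi[n+s]$ is absent when $k$ is fixed) is a detail the paper leaves implicit, but it does not change the approach.
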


\begin{proof} of Corollary \ref{auto_corr_dirichlet}:

	Given $n\in\ZZ/p\ZZ$, we compute
\[
\begin{split}
	&\sum_{k}\sum_{m_1\in\Omega_1(n_1), m_2\in\Omega_2(n_2)}\chi[k+m_1-m_2]\chi[k]e^{2\pi\imath kn/p}e^{-2\pi\imath m_2n/p}\\
	&=\sum_{k,m_1,m_2}\bigg(\frac{1}{\sqrt{p}}\sum_s\chi[s]e^{2\pi\imath(k+m_1-m_2)s/p}\bigg)e^{-2\pi\imath m_2n/p}\chi[k]e^{2\pi\imath kn/p}\\
	&=\sum_s\chi[s]\bigg(\frac{1}{\sqrt{p}}\sum_k\chi[k]e^{2\pi\imath k(n+s)/p}\bigg)\bigg(\sum_{m_1}e^{2\pi\imath m_1s/p}\bigg)\bigg(\sum_{m_2}e^{-2\pi\imath m_2(s+n)/p}\bigg)\\
	&=\sum_s\chi[s]\chi[n+s]\bigg(\sum_{m_1}e^{2\pi\imath m_1s/p}\bigg)\bigg(\sum_{m_2}e^{-2\pi\imath m_2(s+n)/p}\bigg)\\
	&=\sum_{s\neq 0, -n}\chi[s]\chi[n+s]\bigg(\sum_{m_1}e^{2\pi\imath m_1s/p}\bigg)\bigg(\sum_{m_2}e^{-2\pi\imath m_2(s+n)/p}\bigg).
\end{split}
\]
	Assuming $\Omega_1(n_1), \Omega_2(n_2)$ are both intervals in $\ZZ/p\ZZ$, we see that 
\[
|\sum_{m_j\in \Omega_j(n_j)}e^{2\pi\imath m_j t/p}|=|\frac{\sin(\pi|M_j|t/p)}{\sin(\pi t/p)}|,
\]
	where $j=1,2$. Thus, taking the absolute value on both sides, we get this estimate.
\end{proof}

	The proof of Corollary \ref{inferior_one} follows verbatim.

\begin{remark}
	Using H\"{o}lder's inequality and the Fourier transform of the Fej\'{e}r's kernel, we can show that the expression in \eqref{sine_estimate} is less than $p\sqrt{|\Omega_1(n_1)||\Omega_2(n_2)|}$, which equals $p^{3/2}$ when $|\Omega_1(n_1)|=|\Omega_2(n_2)|=\sqrt{p}$.
\end{remark}

	To prove Theorem \ref{auto_corr_main}, we will approximate $\sin(\pi|M_j|(s+t_j)/p)$ and $\sin(\pi (s+t_j)/p)$ with piece-wise linear functions. Then, by summing over all pieces, we shall show that the contribution as a whole is less than $p^{3/2-\alpha}$. 	
\begin{definition}
	We define the following piece-wise polynomials $p_1^u, p_1^l, p_2^u, p_2^l$ as
\[
\left\{\begin{array}{ll}
	p_1^u(s)=2\||M_1|s/p\|,& p_1^l(s)=\| s/p\|,\\
	p_2^u(s)=2\||M_2|(s+n)/p\|,& p_2^l(s)=\| (s+n)/p\|,
\end{array}\right.
\]
	where $\|t\|:=\min_{n\in\ZZ}|t-n|$.
\end{definition}
	Note that
\[
	|\frac{\sin(\pi|M_1|s/p)}{\sin(\pi s/p)}||\frac{\sin(\pi|M_2|(s+n)/p)}{\sin(\pi (s+n)/p)}|\leq \frac{p_1^u(s)p_2^u(s)}{p_1^l(s)p_2^l(s)}.
\]

	As we assume that $|M_2|\geq|M_1|$, the piece-wise linear function of $|\sin(\pi|M_2|(s+n)/p)|$ changes directions most frequently. Thus, we first start with the intervals in which the function does not change direction before expanding into larger intervals. In particular, we define the following intervals:

\begin{definition}
	An interval in $\ZZ/p\ZZ$ with the form $[\frac{pj}{|M_2|}-n,\frac{p(j+1)}{|M_2|}-n]$, $j\in\{-|M_2|/2,\dots, |M_2|/2\}$ is called an $y_j$-interval, by which we denote $I^y_{j}$.
	
	An interval in $\ZZ/p\ZZ$ with the form $[\frac{pi}{|M_1|},\frac{p(i+1)}{|M_1|}]$, $i\in\{-|M_1|/2,\dots, |M_1|/2\}$ is called an $x_j$-interval, by which we denote $I^x_i$.
\end{definition}

	Here, we abuse the notation by denoting the set of numbers $\{a\in\ZZ/p\ZZ: a\in I\}\equiv I$ where $I\subset\RR$ is an interval.

	Given $s\in I_j^y\subset I_i^x$, we denote $x_i,y_j\in\ZZ$ by the integers such that $p_1^u(s)=|\frac{|M_1|s}{p}-x_i|$, $p_2^u(s)=|\frac{|M_2|(s+n)}{p}-y_j|$.

\section{Proof of Theorem \ref{auto_corr_main}}

	In this section, we track only the main terms occurring during the calculation. The error terms will be dealt with in Section \ref{error_term}.

	First, we see that
\[
	\bigg|\frac{\sin(\frac{\pi|M_1|s}{p})}{\sin(\frac{\pi s}{p})}\bigg|\cdot \bigg|\frac{\sin(\frac{\pi|M_2|(s+n)}{p})}{\sin(\frac{\pi(s+n)}{p})}\bigg|\leq\frac{4p^2}{\pi^2}\frac{|\frac{|M_1|s}{p}-x_i||\frac{|M_2|(s+n)}{p}-y_j|}{s(s+n)}=\frac{p_1^u(s)p_2^u(s)}{p_1^l(s)p_2^l(s)},
\]
	where $x_i=x_i(s)\in\{-\lceil\frac{|M_1|}{2}\rceil,-\lceil\frac{|M_1|}{2}\rceil+1,\dots, \lceil\frac{|M_1|}{2}\rceil\}\cap2\ZZ$, $y_j=y_j(s,n)\in\{\lfloor\frac{|M_2|n}{p}\rfloor,\dots, \lceil \frac{|M_2|}{2}+\frac{|M_2|n}{p}\rceil\}\cap2\ZZ$.

	Note that $I_j^y\subset I_i^x\iff y_j\in[x_i|M_2|/|M_1|+|M_2|n/p,x_{i+1}|M_2|/|M_1|+|M_2|n/p-1]=:J_i^x$. Then,
	
\[
	\sum_{s\neq 0, -n}|\frac{\sin(\pi|M_1|s/p)}{\sin(\pi s/p)}||\frac{\sin(\pi|M_2|(s+n)/p)}{\sin(\pi (s+n)/p)}|\leq\sum_{x_i=-|M_1|/2}^{|M_1|/2}\sum_{y_j\in J_i^x}\sum_{s\in I_j^y}\frac{p_1^u(s)p_2^u(s)}{p_1^l(s)p_2^l(s)}.
\]

	In our proof, we would like to smooth out $\{x_i\}_i,\{y_j\}_j$ by $\{z_i=i\}_i, \{w_j=j\}_j$ to simplify the approximation process. By doing so, we split the sum into the following parts:
	
\begin{equation}
\label{sum_split}
\begin{split}
	\sum_{s\neq0,n}\frac{p_1^u(s)p_2^u(s)}{p_1^l(s)p_2^l(s)}&=\sum_{|i|<p^{\epsilon}}\sum_{s\in I_i^x}\frac{p_1^u(s)p_2^u(s)}{p_1^l(s)p_2^l(s)}+\sum_{|i|>p^\epsilon}\sum_{j\in J_i^x}\sum_{s\in I_j^y}\frac{(-1)^{i+j}4p^2}{\pi^2}\frac{(\frac{|M_1|s}{p}-i)(\frac{|M_2|s}{p}-j)}{p_1^l(s)p_2^l(s)}\\
	&+\sum_{|i|>p^\epsilon}\sum_{j\in J_i^x: j \text{ odd}}\sum_{s\in I_j^y}\frac{(-1)^{i}4p^2}{\pi^2}\frac{\frac{|M_1|s}{p}-i}{p_1^l(s)p_2^l(s)}
	+\sum_{|i|>p^\epsilon: i \text{ even}}\sum_{j\in J_i^x: j \text{ odd}}\sum_{s\in I_j^y}\frac{4p^2}{\pi^2 s(s+n)}\\
	&=:E_1+S+E_2+E_3.
\end{split}
\end{equation}

	We shall estimate on each of the four terms to show that \eqref{sum_split} is of order $O(p^{3/2-\alpha})$.

\begin{proposition}
\label{est_summary}
	We have the following estimates:
	
\begin{itemize}
\item [(a)]
\[
	E_1=O(p^{3/2-\delta+\epsilon}).
\]
\item[(b)]
\[
	E_2+E_3=O(p^{3/2-\delta}\log(p)).
\]
\item[(c)]
\[
	S=O(p^{3/2-\sigma-\epsilon})+O(p^{3/2-\delta}\log(p)).
\]
\end{itemize}
\end{proposition}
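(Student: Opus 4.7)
The plan is to estimate each of the four pieces $E_1$, $E_2$, $E_3$, and $S$ separately, using different techniques tailored to each. The guiding principle is that the factor $1/|\sin(\pi(s+n)/p)|$ contributes a gain of $O(p/n)=O(p^{1/2-\delta})$ whenever $s$ is not close to $-n \bmod p$, while near the dual pole the same gain is obtained with the roles of $|\sin(\pi s/p)|$ and $|\sin(\pi(s+n)/p)|$ reversed; the assumption $\delta>\sigma$ ensures these two pole regions are well separated in $\ZZ/p\ZZ$. For $E_1$, the constraint $|i|<p^\epsilon$ forces $|s|\leq p \cdot p^\epsilon / |M_1| \sim p^{1/2+\sigma+\epsilon}$; choosing $\epsilon<\delta-\sigma$ makes $|s|\ll n$, so $\|(s+n)/p\|\asymp n/p$ and $1/p_2^l(s)=O(p^{1/2-\delta})$ uniformly on the whole range. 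Using $p_2^u(s)\leq 1$ together with the standard logarithmic $L^1$-type bound $\sum_s p_1^u(s)/p_1^l(s)=O(p\log|M_1|)$ for the Dirichlet-like ratio, restricted to the $O(p^\epsilon)$ relevant intervals, yields $E_1=O(p^{3/2-\delta+\epsilon})$.

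For $E_2+E_3$, these are the residual terms obtained when one approximates $x_i \to i$ and $y_j\to j$ but only a partial parity condition survives, so the full $(-1)^{i+j}$ alternation exploited in $S$ is not available. I would treat $E_3$ first, since its summand is the cleanest expression $4p^2/(\pi^2 s(s+n))$. Splitting the range of $|i|$ at the transition scale $p^{\delta-\sigma}\sim |M_1|n/p$ where $|s|$ crosses $n$, the small-$|i|$ regime (where $|s+n|\sim n$) contributes $\sum_{p^\epsilon<|i|<p^{\delta-\sigma}}p^2/(|i|n)=O(p^{3/2-\delta}\log p)$, while the large-$|i|$ regime (where $|s+n|\sim |s|$) contributes $\sum_{|i|>p^{\delta-\sigma}}p|M_1|/i^2=O(p^{3/2-\delta})$. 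The intervals containing the dual pole $s\approx -n$ contribute a further $O(p^{3/2-\delta}\log p)$ via the harmonic sum in $|s+n|$. The analogous estimate for $E_2$, exploiting $|(|M_1|s/p)-i|\leq 1$ and the $(-1)^i$ alternation via Abel summation in $i$, yields the same order.

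For $S$, the full alternation $(-1)^{i+j}$ must be harvested. My plan is two successive Abel summations. First, fix $i$ and sum over $j\in J_i^x$ with alternating sign $(-1)^j$: the partial sums $\sum(-1)^j$ are uniformly $O(1)$, while the discrete derivative of the summand in $j$ scales like $1/|J_i^x|=|M_1|/|M_2|$ times its pointwise size, yielding a gain of that magnitude. A second Abel summation in $i$ with $(-1)^i$ converts the outer sum into a convergent $\sum 1/i^2$ form. Combined with $\sum_{|i|>p^\epsilon}1/i^2=O(p^{-\epsilon})$ and the pointwise estimates, this gives the main contribution $O(p^{3/2-\sigma-\epsilon})$. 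Near the dual pole $s\approx -n$ the summand fails to be smooth in $j$ and Abel summation loses effectiveness; there I would fall back on a direct bound in the spirit of part (b), producing the secondary $O(p^{3/2-\delta}\log p)$.

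The main obstacle is the double Abel summation in $S$, where the discrete derivatives of the summand in both $i$ and $j$ must be controlled uniformly, and the near-pole region $s\approx -n$ must be cleanly isolated so that its local contribution is counted once without destroying the alternation-based cancellation elsewhere. The parity assumptions on $|M_1|$ and $|M_2|/|M_1|$ appear to play a key role in ensuring that each $J_i^x$ contains an even number of $j$'s, so that the alternating partial sums close without unpaired boundary terms.
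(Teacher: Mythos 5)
Your treatment of (a) and (b) is essentially the paper's. For $E_1$ the paper likewise exploits that on $|i|<p^\epsilon$ one has $|s|\lesssim p^{1/2+\sigma+\epsilon}\ll n$, so the factor $1/p_2^l$ is $O(p^{1/2-\delta})$ uniformly, and a crude bound on the Dirichlet-type ratio (the paper simply uses $|M_1|$ times a harmonic sum in $s+n$) gives $O(p^{3/2-\delta+\epsilon})$; for $E_2+E_3$ the paper packages the same pole-splitting/harmonic-sum estimates you describe in an integral-comparison lemma, obtaining $O(p^2n^{-1}i^{-1})$ per $i$ and hence $O(p^2n^{-1}\log|M_1|)=O(p^{3/2-\delta}\log p)$; note that no $(-1)^i$ alternation is needed (or used) for $E_2$, so that part of your plan is an unnecessary complication rather than an error.

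For (c) your route diverges from the paper's, and as written it has a gap. The paper's mechanism is: expand each $y_j$-block sum explicitly (Lemma \ref{y_j_contribute}) into three monotone main terms plus arithmetic error terms $f_1,\dots,f_4'$; use only the $(-1)^j$ alternation inside each $J_i^x$ (your first Abel summation, in the guise of pairing an alternating series with monotone terms), giving $O(p|M_1|i^{-2})+O(p^{3/2-\delta}i^{-1})$ per $x_i$-interval; then sum over $i>p^\epsilon$ with absolute values, so that $\sum_{i>p^\epsilon}i^{-2}=O(p^{-\epsilon})$ already produces the $O(p^{3/2-\sigma-\epsilon})$ term. Your second Abel summation in $i$ is therefore not needed, and it also cannot be run as you describe: once the inner $j$-sums are bounded in absolute value by a derivative estimate, no usable sign in $i$ remains; to harvest $(-1)^i$ you would have to track the signed main terms explicitly, i.e., carry out the paper's expansion. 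More substantively, your step-one derivative estimate treats the block sums as smooth functions of $j$, but they are not: the endpoints $pj/|M_2|-n$ are not integers, so consecutive blocks contain fluctuating numbers of lattice points and are related by a non-integer shift $p/|M_2|$. Quantifying exactly this non-smooth, arithmetic component is where most of the paper's effort lies --- the error terms $f_1,\dots,f_4'$ of Lemma \ref{y_j_contribute}, Proposition \ref{error_f_prop}, and all of Section \ref{error_term}, in particular the critical-zone/Fej\'er-kernel equidistribution argument for $f_1$, which is where the evenness of $|M_1|$ genuinely enters (not merely to close alternating partial sums, as your last paragraph suggests). Your proposal is silent on all of this. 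A mitigating remark: the stated targets are generous --- bounding everything in absolute value and using only $\|(s+n)/p\|\gtrsim n/p$ off the poles already yields $O(p^2\log p/n)=O(p^{3/2-\delta}\log p)$, which meets (b) and (c) and explains why your conservative derivative scale $|M_1|/|M_2|$ (the true relative scale is of order $1/j$) would not sink the estimate --- but the cancellation mechanism you actually invoke for the $O(p^{3/2-\sigma-\epsilon})$ term is not justified as stated, and what it would have to replace is precisely the block expansion and error analysis that constitute the paper's proof of (c).
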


	With the estimates in Proposition \ref{est_summary}, we can prove Theorem \ref{auto_corr_main}:
\begin{proof} of Theorem \ref{auto_corr_main}:

	From Proposition \ref{est_summary}, we see that 
\[
	\sum_{s\neq0,n}\frac{p_1^u(s)p_2^u(s)}{p_1^l(s)p_2^l(s)}=O(p^{3/2-\sigma-\epsilon})+O(p^{3/2-\delta}\log p)+O(p^{3/2-\delta+\epsilon})=O(p^{3/2-\alpha_\epsilon})
\]
	where $\alpha_\epsilon=\min\{\epsilon+\sigma,\delta-\epsilon\}$. Since the choice of $\epsilon$ is arbitrary, we can optimize $\alpha$ to be $\sigma+(\delta-\sigma)/2$, which is what we claimed.

\end{proof}

	We first consider the case when $s$ is positive. The case when $s$ is negative is similar, and the proof for positive indices can be modified verbatim. We consider the term $S$ in \eqref{sum_split} to be the main term, while the rest are considered as correction terms. We shall first compute all three correction terms before dealing with the main term.
	
\section{Estimates of Correction Terms}

	First, we shall prove Proposition \ref{est_summary} (a).

\begin{proof} of Proposition  \ref{est_summary} (a):

	Assuming that $|x_i|\leq p^\epsilon$ and $|M_1|\sim p^{1/2-\sigma}$, $\sigma\in(0,1/2)$, we have $|s|\leq\frac{px_i}{|M_1|}\sim p^{1/2+\epsilon+\sigma}$. Note that $n\sim p^{1/2+\delta}$ where $\delta>\epsilon+\sigma$, $\delta\in(0,1/2)$. Thus,

\begin{equation}
\begin{split}
	\sum_{|s|\leq p^{1/2+\epsilon+\sigma}}|\frac{\sin(\pi|M_1|s/p)}{\sin(\pi s/p)}||\frac{\sin(\pi|M_2|(s+n)/p)}{\sin(\pi (s+n)/p)}|&\leq |M_1|\sum_{|s|\leq p^{1/2+\epsilon+\sigma}}\frac{p}{\pi(s+n)}\\
			&\leq p|M_1|\log(\frac{n+p^{1/2+\epsilon+\sigma}}{n-p^{1/2+\epsilon+\sigma}})\\
			&=p|M_1|\log(1+\frac{1}{np^{-1/2-\epsilon-\sigma}-1})\\
			&\sim p|M_1|p^{-\delta+\epsilon+\sigma}\sim p^{3/2-\delta+\epsilon}.
\end{split}
\end{equation}

	Around the singular point $s=-n$, we make sure to take out an even number of $y_j$-intervals so the cancellations still occur in the remaining $x_i$-interval. Thus, the summation range is $|s+n|\leq \frac{kp}{|M_2|}$ for some $k\in\NN$. Then,
	
\begin{equation}
\begin{split}
	\sum_{|s+n|\leq \frac{p}{|M_2|}}|\frac{\sin(\pi|M_1|s/p)}{\sin(\pi s/p)}||\frac{\sin(\pi|M_2|(s+n)/p)}{\sin(\pi (s+n)/p)}|&\leq |M_2|\sum_{|s+n|\leq p^{1/2+\epsilon+\sigma}}\frac{p}{\pi |s|}\\
			&\leq p|M_2|\log(\frac{n+p/|M_2|}{n-p/|M_2|})\\
			&=p|M_2|\log(1+\frac{1}{n|M_2|/p-1})\\
			&\sim p^{2-1/2-\delta+\epsilon}=p^{3/2-\delta+\epsilon}.
\end{split}
\end{equation}

\end{proof}

	To prove Proposition \ref{est_summary} (b), we need the following lemma:
\begin{lemma}
\label{approx_lemma}
	Let $f,g:\ZZ\to\RR$ be $f(s)=\frac{1}{s}$ and $f(s)=\frac{1}{s(s+t)}$ for some $t\in\RR$. If $1<a<a+1<b$ is such that $\frac{b}{a}=1+r$ for some $r\in(0,1)$, then
\[
	\left\{\begin{array}{lcl}
	\sum_{a\leq s\leq b}f(s)&=&r+O(r^2)+O(b^{-1})\\
	\sum_{a\leq s\leq b}g(s)&=&\frac{r(r+t/a)}{t(1+r+t/a)}+O(\frac{r^2}{t}).
	\end{array}\right.
\] 
\end{lemma} 

\begin{proof}
	Since both $f$ and $g$ are monotone in $(a,b)$, we may approximate the summation of both $f$ and $g$ with their respective integrals. Moreover,
\[
	|\sum_{a\leq s\leq b}\frac{1}{s}-\int_a^b\frac{1}{x}dx|\leq\int_{b-1}^{b+1}\frac{1}{x}dx=\log(1+\frac{2}{b-1})=\frac{2}{b-1}+O(b^{-2}).
\]
	Thus,
\[
	\sum_{a\leq s\leq b}\frac{1}{s}=\log(1+r)+O(b^{-1})=r+O(b^{-1}).
\]

	Note that $\frac{1}{b}<\frac{1}{a}<\frac{b-a}{a}=r$, so $\sum_{a\leq s\leq b}f(s)=O(r)$. For $g$, we have
\[
\begin{split}
	|\sum_{a\leq s\leq b}\frac{1}{s(s+t)}-\int_a^b\frac{1}{s(s+t)}ds|&\leq\int_{b-1}^{b+1}\frac{1}{s(s+t)}ds\\
			&=\frac{1}{t}\log\left(\frac{(b+1)(b+t-1)}{(b-1)(b+t+1)}\right)\\
			&=\frac{1}{t}\log\left((1+\frac{2}{b-1})(1-\frac{2}{b+t+1})\right)\\
			&=\frac{1}{t}(\frac{2}{b-1}-\frac{2}{b+t+1})+O(\frac{1}{tb^2}).
\end{split}
\]

	Thus,
\[
\begin{split}
	\sum_{a\leq s\leq b}\frac{1}{s(s+t)}&=\frac{1}{t}\log\left((1+r)(1-\frac{(b-a)}{b+t})\right)\\
			&=\frac{r}{t}(1-\frac{1}{(1+r)+t/a})+O(\frac{r^2}{t})=\frac{r(r+t/a)}{t(1+r+t/a)}+O(\frac{r^2}{t}).
\end{split}
\]
\end{proof}

\begin{proof} of Proposition \ref{est_summary}:

	Fixing $i$, consider
\begin{equation}
\label{first_correct_term}
\begin{split}
E_2&=p^2\sum_{(2k+1)\in J_i^x}\sum_{s\in I_j^y}\frac{\frac{|M_1|s}{p}-i}{s(s+n)}\\
	&\leq\sum_{s\in I_i^x}\frac{\frac{|M_1|s}{p}-i}{s(s+n)}\\
	&=\sum_{s\in I_i^x}p|M_1|\frac{1}{s+n}-\frac{p^2i}{s(s+n)}\\
	&=\sum_{s\in I_i^x}p|M_1|\frac{1}{s+n}-\frac{p^2i}{n}(\frac{1}{s}-\frac{1}{s+n})\\
	&=p|M_1|\frac{\frac{p}{|M_1|}}{\frac{pi}{|M_1|}+n}+O(p|M_1|r^2)+O(|M_1|^2 i^{-1})\\
	&-\frac{p^2i}{n}\left(\frac{1}{i}-\frac{\frac{p}{|M_1|}}{\frac{pi}{|M_1|}+n}\right)+O(p^2n^{-1}x_i^{-1})\\
	&=p|M_1|\frac{1}{i+\frac{|M_1|n}{p}}-\frac{p^2i}{n}\frac{\frac{|M_1|n}{p}}{i(i+\frac{|M_1|n}{p})}+O(p^2n^{-1}x_i^{-1})\\
	&=O(p^2 n^{-1}i^{-1}).
\end{split}
\end{equation}
	
	As for $E_3$, by Lemma \ref{approx_lemma},
\begin{equation}
\label{second_correct_term}
\begin{split}
	E_3&=p^2\sum_{i\in2\ZZ+1: i<|M_1|/2}\sum_{(2k+1)\in J_i^x}\sum_{s\in I_j^y}\frac{1}{s(s+n)}\\
	&\leq \sum_{i<|M_1|/2}p^2\sum_{s\in I_i^x}\frac{1}{s(s+n)}\\
	&\leq \sum_{i<|M_1|/2}\frac{p^2}{ni}+O(p^2n^{-1}i^{-2})=O(p^2n^{-1}\log(|M_1|)).
\end{split}
\end{equation}
	Combining \eqref{first_correct_term} and \eqref{second_correct_term}, we see that the total contribution is $p^{2-(1/2+\delta)}\log(|M_1|)= O(p^{3/2-\delta}\log(p))$.
\end{proof}

\section{Estimates of the Main Term}

	To estimate $S$ in \eqref{sum_split}, we start by computing the expression of the sum in one $y_j$-interval.

\subsection{Estimates within $y_j$-Intervals}

\begin{lemma}
\label{y_j_contribute}
	Given $j>0$, the sum of the main term within the $y_j$-interval satisfies
\[
	E_y(j):=\sum_{s\in I_j^y}\frac{4p^2}{\pi^2}\frac{(\frac{|M_1|s}{p}-i)(\frac{|M_2|(s+n)}{p}-j)}{s(s+n)}-\left(\frac{-2p^3i}{\pi^2|M_2|\tilde{y}_j^2}+\frac{2np^2i}{\tilde{y}_j^2 j}+\frac{2p|M_1|}{\pi^2j}\right)=O(p^{3/2-\alpha}),
\]
	where $\tilde{y}_j=\frac{pj}{|M_2|}-n$, and  $\alpha$ is as defined in Theorem \ref{auto_corr_main}.
\end{lemma}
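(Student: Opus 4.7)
The plan is to expand the summand algebraically on $I_j^y = [\tilde y_j, \tilde y_j + p/|M_2|]$ into four simpler monotone pieces, then approximate each resulting sum by an integral via Lemma \ref{approx_lemma}. First I would multiply out the numerator and apply the partial-fraction identity $\frac{1}{s(s+n)} = \frac{1}{n}\bigl(\frac{1}{s} - \frac{1}{s+n}\bigr)$, which rewrites the summand as
\[
\frac{4|M_1||M_2|}{\pi^2} \;-\; \frac{4pj|M_1|}{\pi^2(s+n)} \;-\; \frac{4pi|M_2|}{\pi^2 s} \;+\; \frac{4p^2 ij}{\pi^2\, s(s+n)}.
\]
Because $|i|>p^\epsilon$ and $j\neq 0$, none of these four pieces is singular on $I_j^y$, and each is monotone there, so Lemma \ref{approx_lemma} applies uniformly.

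Next I would sum each piece over the $\approx p/|M_2|$ integers in $I_j^y$. The constant piece gives $\tfrac{4p|M_1|}{\pi^2} + O(1)$. For the $1/(s+n)$ piece, the corresponding integral evaluates to $\log(1+1/j)$, whose Taylor expansion $\tfrac{1}{j} - \tfrac{1}{2j^2} + O(j^{-3})$, after multiplication by the prefactor $-\tfrac{4pj|M_1|}{\pi^2}$, yields a leading contribution $-\tfrac{4p|M_1|}{\pi^2}$ that cancels the constant piece exactly, leaving behind the secondary term $\tfrac{2p|M_1|}{\pi^2 j}$ that matches the third main term in the lemma. For the two remaining pieces I would expand simultaneously using Lemma \ref{approx_lemma} with $a=\tilde y_j$, $b=\tilde y_j + p/|M_2|$, and ratio $r=(p/|M_2|)/\tilde y_j$. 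The crucial algebraic identity $pj/|M_2| = \tilde y_j + n$ then forces a cancellation between the $-\tfrac{4p^2 i}{\pi^2 \tilde y_j}$ leading contribution of the $1/s$ piece and the matching contribution arising from the $1/(s(s+n))$ piece via its partial-fraction expansion; what survives from the $O(r^2)$ and $O(1/j^2)$ corrections is precisely $\tfrac{-2p^3 i}{\pi^2 |M_2|\tilde y_j^2} + \tfrac{2np^2 i}{\tilde y_j^2 j}$.

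The main obstacle is the precise error bookkeeping rather than any single step. Three of the four sub-sums have leading contributions of size $p|M_1|$ or $p^2 i/\tilde y_j$, both much larger than the target residual $p^{3/2-\alpha}$, so their cancellations must be verified to the correct order of the Taylor expansion; any slippage destroys the estimate. Once those cancellations are checked, what remains on a single interval are the $O(r^2)$ and $O(r^2/n)$ discrete-to-integral errors supplied by Lemma \ref{approx_lemma} together with the $O(r^3)$ and $O(j^{-3})$ higher-order Taylor tails. These are small per interval, and crucially they are intended to be combined later against the alternating signs $(-1)^{i+j}$ present in the definition of $S$ in \eqref{sum_split} and the summation over $j\in J_i^x$ and $|i|>p^\epsilon$; I would carry out the Taylor expansion to whatever order is needed so that the residuals sum to $O(p^{3/2-\alpha})$, while the small-$j$ and $\tilde y_j$-near-zero regimes are not treated here but are instead absorbed into the correction terms $E_1, E_2, E_3$ controlled in Proposition \ref{est_summary}(a)--(b).
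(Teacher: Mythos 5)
Your decomposition and main-term bookkeeping are exactly the paper's route: the same four-way split into the constant, $1/(s+n)$, $1/s$ and $1/(s(s+n))$ pieces, integral approximation of each via Lemma \ref{approx_lemma}, cancellation of the $\frac{4p|M_1|}{\pi^2}$ leading contribution against the $\log(1+1/j)$ expansion, and cancellation of the $-\frac{4p^2 i}{\pi^2\tilde y_j}$-type leading terms via $\frac{pj}{|M_2|}=\tilde y_j+n$, leaving precisely the three stated main terms. So the heart of your argument is correct and coincides with the paper's.

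There is, however, one concrete slip in your error budget, and it sits exactly where the real difficulty of the lemma lies. You claim the constant piece contributes $\frac{4p|M_1|}{\pi^2}+O(1)$; in fact the number of integers in $I_j^y$ is $p/|M_2|+O(1)$, so the error is $O(|M_1||M_2|)\sim p^{1-\sigma}$ per interval (this is the term $\frac{4|M_1||M_2|}{\pi^2}f_1(j)$ in the paper), not $O(1)$. Summed naively over the $\sim|M_2|$ values of $j$ this gives $O(p^{3/2-\sigma})$, which already exceeds the target $p^{3/2-\alpha}$ since $\alpha>\sigma$, and your proposed remedy --- ``carry out the Taylor expansion to whatever order is needed'' together with generic use of the alternating signs --- does not touch this term: $f_1(j)$ is a fractional-part (lattice-counting) error, not a Taylor tail. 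The paper controls it only through a dedicated argument (the last item of Proposition \ref{error_f_prop}, proved in Section \ref{error_term}) exploiting that $|M_1|$ is even, that $\gcd(p,|M_2|)=1$ so $\{pj/|M_2|\}$ equidistributes over cosets, and a Fej\'er-kernel/H\"older estimate over the ``critical zone,'' which brings the total $f_1$-contribution down to $O(p)$. Your remaining error terms ($f_2,f_3,f_4$ analogues and the higher-order Taylor tails) do sum acceptably much as you describe, matching the paper's treatment, but without an argument of the $f_1$ type your plan cannot close the estimate.
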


\begin{proof}

	For $s\in I_j^y$ all $p_1^u, p_2^u, p_1^l, p_2^l$ are linear and none changes sign. Thus, 
\[
\sum_{s\in I_j^y}\frac{p_1^u(s)p_2^u(s)}{p_1^l(s)p_2^l(s)}=\pm\sum_{s\in I_j^y}\frac{4p^2}{\pi^2}\frac{(\frac{|M_1|s}{p}-i)(\frac{|M_2|(s+n)}{p}-j)}{s(s+n)}.
\]	
	Thus, we would like to compute
\[
\begin{split}
	&\sum_{s\in I_j^y}\frac{4p^2}{\pi^2}\frac{(\frac{|M_1|s}{p}-i)(\frac{|M_2|(s+n)}{p}-j)}{s(s+n)}\\
	&=\sum_{\frac{pj}{|M_2|}-n\leq s\leq \frac{p{j+1}}{|M_2|}-n}\bigg\{\frac{4}{\pi^2}|M_1||M_2|-\frac{4p|M_1|j}{\pi^2}\frac{1}{s+n}-\frac{4p|M_2|i}{\pi^2}\frac{1}{s}+\frac{4p^2}{\pi^2}\frac{i j}{s(s+n)}    \bigg\}\\
	&=\frac{4|M_1||M_2|}{\pi^2}(\tilde{y}_{j+1}-\tilde{y}_j+f_1(j))-\frac{4p|M_1|j}{\pi^2}(\log(\frac{{j+1}}{j})+f_2(j))-\frac{4p|M_2|i}{\pi^2}(\log(\frac{\tilde{y}_{j+1}}{\tilde{y}_j})+f_3(j))\\
	&+\frac{4p^2 ij}{n\pi^2}\bigg(\log(\frac{\tilde{y}_{j+1}}{\tilde{y}_j})-\log(\frac{\tilde{y}_{j+1}+n}{\tilde{y}_j+n})+f_4(j)\bigg),
\end{split}
\]
	where we recall that $\tilde{y}_j=\frac{py_j}{|M_2|}-n$.

	Note that $\tilde{y}_{j+1}-\tilde{y}_j=\frac{p}{|M_2|}$, and also $\log(\frac{{j+1}}{j})=\log(1+\frac{1}{j})=\frac{1}{j}-\frac{1}{2j^2}+O(j^{-3})$. Thus,
\[
	\frac{4|M_1||M_2|}{\pi^2}(\tilde{y}_{j+1}-\tilde{y}_j)-\frac{4p|M_1|j}{\pi^2}\log(\frac{{j+1}}{j})=\frac{2p|M_1|}{\pi^2j}+O(|M_1|py_j^{-2}).
\]

	Now, 
\[
\begin{split}
\log(\frac{\tilde{y}_{j+1}}{\tilde{y}_j})=\log(1+\frac{\frac{p}{|M_2|}}{\tilde{y}_j})&=\frac{p}{|M_2|}\frac{1}{\frac{pj}{|M_2|}-n}-\frac{p^2}{2|M_2|^2}\frac{1}{\tilde{y}_j^2}+O(p^3|M_2|^{-3}\tilde{y}_j^{-3})\\
			&=\frac{p}{|M_2|}\frac{1}{\frac{pj}{|M_2|}-n}-\frac{p^2}{2|M_2|^2}\frac{1}{\tilde{y}_j^2}+\frac{p^3}{3|M_2|^3}\frac{1}{\tilde{y}_j^3}+O(p^4|M_2|^{-4}\tilde{y}_j^{-4}),
\end{split}
\] 
	and 
\begin{equation}
\label{tilde_log}
\begin{split}
\log(\frac{\tilde{y}_{j+1}+n}{\tilde{y}_j+n})=\log(\frac{{j+1}}{j})&=\frac{1}{j}-\frac{1}{2j^2}+O(j^{-3})\\
			&=\frac{1}{j}-\frac{1}{2j^2}+\frac{1}{3j^3}+O(j^{-4}).
\end{split}
\end{equation}
	Thus, we see that, using \eqref{tilde_log}, 
\[
\begin{split}
	&-\frac{4p|M_2|i}{\pi^2}\frac{p}{|M_2|}\frac{1}{\frac{pj}{|M_2|}-n}+\frac{4p^2ij}{n\pi^2}\frac{p}{|M_2|}\frac{1}{\frac{pj}{|M_2|}-n}\\
	&=\frac{-4p^2j}{\pi^2}\frac{1}{\frac{pj}{|M_2|}-n}+\frac{4p^2i}{n\pi^2}+\frac{4p^2i}{\pi}\frac{1}{\frac{pj}{|M_2|}-n}\\
	&=\frac{4p^2i}{n\pi^2}.
\end{split}
\]
	Combining all of the above, we get that
\begin{equation}
\label{first_extract}
\begin{split}
	&\frac{4|M_1||M_2|}{\pi^2}(\tilde{y}_{j+1}-\tilde{y}_j+f_1(j))-\frac{4p|M_1|j}{\pi^2}(\log(\frac{{j+1}}{j})+f_2(j))-\frac{4p|M_2|i}{\pi^2}(\log(\frac{\tilde{y}_{j+1}}{\tilde{y}_j})+f_3(j))\\
	&+\frac{4p^2 ij}{n\pi^2}\bigg(\log(\frac{\tilde{y}_{j+1}}{\tilde{y}_j})-\log(\frac{\tilde{y}_{j+1}+n}{\tilde{y}_j+n})+f_4(j)\bigg)\\
	&=\frac{4|M_1||M_2|}{\pi^2}f_1(j)+\frac{2p|M_1|}{\pi^2 j}+O(|M_1|pj^{-2})-\frac{4p|M_1|j}{\pi^2}f_2(j)+\frac{2p^3i}{\pi^2|M_2|\tilde{y}_j^2}-\frac{4p|M_2|i}{\pi^2}f_3(j)\\
	&+O(|M_3|^{-2}p^4\tilde{y}_j^{-3})+\frac{4p^2i}{n\pi^2}-\frac{2p^4ij}{n|M_2|^2\pi^2\tilde{y}_j^2}-\frac{4p^2i}{n\pi^2}+\frac{2p^2i}{n\pi^2j}+\frac{4p^2ij}{n\pi^2}f_4(j)+\frac{4p^{2}ij}{3n\pi^2}f_4'(j)\\
	&+O(n^{-1}p^6|M_2|^{-4}ij\tilde{y}_j^{-4})+O(n^{-1}p^2ij^{-3})\\
	&=-\frac{2p^4ij}{n|M_2|^2\pi^2\tilde{y}_j^2}+\frac{2p^2x_i}{n\pi^2y_j}+\frac{2p|M_1|}{\pi^2 y_j}+\frac{2p^3x_i}{\pi^2|M_2|\tilde{y}_j^2}\\
	&+\frac{4|M_1||M_2|}{\pi^2}f_1(j)-\frac{4p|M_1|j}{\pi^2}f_2(j)-\frac{4p|M_2|i}{\pi^2}f_3(j)+\frac{4p^2ij}{n\pi^2}f_4(j)+\frac{4p^{2}ij}{3n\pi^2}f_4'(j)\\
	&+O(|M_1|pj^{-2})+O(|M_2|^{-2}p^4\tilde{y}_j^{-3})+O(n^{-1}p^6|M_2|^{-4}ij\tilde{y}_j^{-4})+O(n^{-1}p^2ij^{-3}),
\end{split}
\end{equation}
	where
	
\[
\begin{split}
	\frac{4p^{2}ij}{3n\pi^2}f_4'(j)&=\frac{4p^{2}ij}{3n\pi^2}\bigg(\frac{p^3}{|M_2|^3}\frac{1}{\tilde{y}_j^3}-\frac{1}{j^3}\bigg)\\
		&=\frac{4p^2ij}{3n\pi^2}\bigg(\frac{3\frac{p^2}{|M_2|^2}j^2n-3\frac{p}{|M_2|}jn^2+n^3}{\tilde{y}_j^3 j^3}\big)\\
		&=O(p^4|M_2|^{-2}ij^{-1}\tilde{y}_j^{-3})+O(p^3|M_2|^{-1}nij^{-2}\tilde{y}_j^{-3})+O(p^2n^2ij^{-3}\tilde{y}_j^{-3}).
\end{split}
\]

	In \eqref{first_extract}, we have four explicit terms remaining, namely
\begin{equation}
\label{explicit_terms}
	\frac{-2p^4ij}{n\pi^2|M_2|^2\tilde{y}_j^2}+\frac{2p^2i}{n\pi^2j}+\frac{2p|M_1|}{\pi^2j}+\frac{2p^3i}{\pi^2|M_2|\tilde{y}_j^2}.
\end{equation}
		
	Further simplifying the expressions, we have
	
\[
\begin{split}
	\frac{-2p^4ij}{n\pi^2|M_2|^2\tilde{y}_j^2}+\frac{2p^2i}{n\pi^2j}+\frac{2p^3i}{\pi^2|M_2|\tilde{y}_j^2}
	&=\frac{2p^2}{n\pi^2}\bigg[\frac{\tilde{y}_j^2-\frac{p^2}{|M_2|^2}j^2}{\tilde{y}_j^2 j}\bigg]+\frac{2p^3i}{\pi^2|M_2|\tilde{y}_j^2}\\
	&=\frac{2p^2}{n\pi^2}\frac{-\frac{2p}{|M_2|}nj+n^2}{\tilde{y}_j^2 j}+\frac{2p^3i}{\pi^2|M_2|\tilde{y}_j^2}\\
	&=\frac{-4p^3i}{\pi^2|M_2|\tilde{y}_j^2}+\frac{2np^2i}{\tilde{y}_j^2 j}+\frac{2p^3i}{\pi^2|M_2|\tilde{y}_j^2}\\
	&=\frac{-2p^3i}{\pi^2|M_2|\tilde{y}_j^2}+\frac{2np^2i}{\tilde{y}_j^2 j}.
\end{split}
\]

	To this point, we have computed all the main terms, and we have
\begin{equation}
\label{E_expand}
\begin{split}
	E(j)=&\frac{4|M_1||M_2|}{\pi^2}f_1(j)-\frac{4p|M_1|j}{\pi^2}f_2(j)-\frac{4p|M_2|i}{\pi^2}f_3(j)+\frac{4p^2ij}{n\pi^2}f_4(j)+\frac{4p^{2}ij}{3n\pi^2}f_4'(j)\\
	&+O(|M_1|pj^{-2})+O(|M_2|^{-2}p^4\tilde{y}_j^{-3})+O(n^{-1}p^6|M_2|^{-4}ij\tilde{y}_j^{-4})+O(n^{-1}p^2ij^{-3}).
\end{split}
\end{equation}
	To estimate the effect of $f_1,f_2,f_3,f_4,f_4'$, we refer to the following proposition which shall be proved in Section \ref{error_term}.
\begin{proposition}
\label{error_f_prop}
	The following estimates hold:
\begin{itemize}
\item
\[
\sum_{|i|>p^\epsilon}\sum_{y_j\in J^x_i}\frac{4p^{2}ij}{3n\pi^2}|f_4'(j)|=\sum_{|i|>p^\epsilon}O(p|M_1|i^{-3})=O(p^{3/2-\sigma-2\epsilon}).
\]		
\item
\[
	\sum_{i=-|M_1|/2}^{|M_1|/2}\sum_{j\in J_i^x}\frac{4p|M_2|i}{\pi^2}|f_3(j)|+\frac{4p^2ij}{n\pi^2}|f_4(j)|=O(|M_1||M_2|\log p).
\]
\item
\[
\sum_{y_j=1}^{|M_2|}\frac{4p|M_1|j}{\pi^2}f_2(j)=O(|M_1||M_2|\log |M_2|).
\]

\item
	$\sum_{x\geq p^\epsilon}\sum_{y\in J_i^x}|M_1||M_2|f_1(j)=O(|M_2|^2)=O(p)$ if $|M_1|$ is even.
\end{itemize}
\end{proposition}

	Proposition \ref{error_f_prop} shows that the first five terms in \eqref{E_expand} sums up to be of the order $O(p^{3/2-\sigma-2\epsilon})$. Thus, it remains to show that the final four terms in \eqref{E_expand} can be well controlled.
	
	Note that $p|M_2|^-1\tilde{y}_j^{-1}=\frac{1}{j-\frac{n|M_2|}{p}}=:\frac{1}{j-t}$. Thus,
\begin{itemize}
\item
\[
	\sum_{|i|>p^\epsilon}\sum_{j\in J_i^x}p|M_1|j^{-2}=p|M_1|\sum_{|j|>\frac{|M_2|}{|M_1|}p^{\epsilon}}j^{-2}= O(p^{3/2-\sigma-\epsilon}).
\]
\item
\[
	\sum_{|i|>p^\epsilon}\sum_{j\in J_i^x} p^4|M_2|^{-2}\tilde{y}_j^{-3}=p|M_2|^{-1}\sum_{|j|>\frac{|M_2|}{|M_1|}p^\epsilon}\frac{1}{(j-t)^3}=O(p^{1/2}).
\]
\item
\[
\begin{split}
	\sum_{|i|>p^\epsilon}\sum_{j\in J_i^x} n^{-1}p^6|M_2|^{-4}ij\tilde{y}_j^{-4}&=\frac{p^2}{n}\sum_{|i|>p^\epsilon}\sum_{j\in J_i^x}i\left(\frac{1}{(j-t)^4}+\frac{t}{(j-t)^3}\right)\\
			&=\frac{p^2}{n}\sum_{|i|>p^\epsilon}\sum_{k=\frac{|M_2|i}{|M_1|}}^{\frac{|M_2|(i+1)}{|M_1|}}i\left(\frac{1}{k^4}+\frac{t}{k^3}\right)\\
			&\leq\frac{p^2}{n}\sum_{|i|>p^\epsilon}\frac{3|M_1|^3}{|M_2|^3}\frac{i((i+1)^3-i^3)}{i^3(i+1)^3}+\frac{2|M_1|^2}{|M_2|^2}\frac{it((i+1)^2-i^2)}{i^2(i+1)^2}\\
			&\leq\frac{20p^2}{n}\frac{|M_1|^3}{|M_2|^3}p^{-2\epsilon}+\frac{|M_1|^2}{|M_2|^2}\frac{n|M_2|}{p}p^{-\epsilon}\\
			&=O(\frac{20p^2}{n}(p^{-2\epsilon}+p^{\delta-\sigma-\epsilon}))=O(p^{3/2-\sigma-\epsilon}).
\end{split}
\]
\item
\[
	\sum_{|i|>p^\epsilon}\sum_{j\in J_i^x} n^{-1}p^2 ij^{-3}\leq\frac{p^2}{n}\sum_{|i|>p^\epsilon}\frac{|M_1|^2i}{|M_2|^2}\left(\frac{1}{i^2}-\frac{1}{(i+1)^2}\right)=\frac{p^2}{n}\sum_{|i|>p^\epsilon}\frac{|M_1|^2}{|M_2|^2}\frac{2i+1}{i(i+1)^2}= O(p^{3/2-\delta-\epsilon}).
\]
\end{itemize}
	Combining all the terms above, we see that $\sum_{|i|>p^\epsilon}\sum_{j\in J_i^x}E(j)=O(p^{3/2-\sigma-\epsilon})$. Choosing $\epsilon=(\delta-\sigma)/2$, we see that it is indeed of the order $p^{3/2-\alpha}$.
\end{proof}


\subsection{Estimates within $x_i$-Intervals}

	Within a given $I_i^x$, $p_1^u, p_1^l, p_2^l$ do not change signs, but $p_2^u$ does between $I_j^y$ and $I_{j+1}^y$. Thus, the main terms in Lemma \ref{y_j_contribute} flip signs across different $y_j$-intervals.
	
	Note that between consecutive $y_j$-intervals, either $y_{j+1}=y_j$ or $y_{j+1}=y_j+2$ by construction. Moreover, $y_0=x_0=0$. In this section, we replace $\{y_j\}_j$ by $\{z_j\}_j$ where $z_j=j$. Then, we have
\[
	y_j-z_j=\left\{\begin{array}{lcl}
	1&\text{if}& j\in2\ZZ+1\\
	0&\text{if}& j\in2\ZZ.
	\end{array}\right.
\]
	In particular, we may split the sum into
\[
	\sum_{j\in J_i^x}\sum_{s\in I_j^y}\frac{p^u_1(s)p^u_2(s)}{p^l_1(s)p^l_2(s)}=\sum_{j\in J_i^x}(-1)^j F(j)+\frac{4p^2}{\pi^2}\sum_{(2k+1)\in J_i^x}\sum_{s\in I_j^y}\frac{\frac{|M_1|s}{p}-x_j}{s(s+n)},
\]
	where 
\[
F(j)=\sum_{s\in I_j^y}\frac{(\frac{|M_1|x}{p}-x_j)(\frac{|M_2|(s+n)}{p}-j)}{s(s+n)}.
\]

	We present the following approximation lemma:

\begin{lemma}
\label{approx_lemma}
	Let $f,g:\ZZ\to\RR$ be $f(s)=\frac{1}{s}$ and $f(s)=\frac{1}{s(s+t)}$ for some $t\in\RR$. If $1<a<a+1<b$ is such that $\frac{b}{a}=1+r$ for some $r\in(0,1)$, then
\[
	\left\{\begin{array}{lcl}
	\sum_{a\leq s\leq b}f(s)&=&r+O(r^2)+O(b^{-1})\\
	\sum_{a\leq s\leq b}g(s)&=&\frac{r(r+t/a)}{t(1+r+t/a)}+O(\frac{r^2}{t}).
	\end{array}\right.
\] 
\end{lemma} 

\begin{proof}
	Since both $f$ and $g$ are monotone in $(a,b)$, we may approximate the summation of both $f$ and $g$ with their respective integrals. Moreover,
\[
	|\sum_{a\leq s\leq b}\frac{1}{s}-\int_a^b\frac{1}{x}dx|\leq\int_{b-1}^{b+1}\frac{1}{x}dx=\log(1+\frac{2}{b-1})=\frac{2}{b-1}+O(b^{-2}).
\]
	Thus,
\[
	\sum_{a\leq s\leq b}\frac{1}{s}=\log(1+r)+O(b^{-1})=r+O(b^{-1}).
\]

	Note that $\frac{1}{b}<\frac{1}{a}<\frac{b-a}{a}=r$, so $\sum_{a\leq s\leq b}f(s)=O(r)$. For $g$, we have
\[
\begin{split}
	|\sum_{a\leq s\leq b}\frac{1}{s(s+t)}-\int_a^b\frac{1}{s(s+t)}ds|&\leq\int_{b-1}^{b+1}\frac{1}{s(s+t)}ds\\
			&=\frac{1}{t}\log\left(\frac{(b+1)(b+t-1)}{(b-1)(b+t+1)}\right)\\
			&=\frac{1}{t}\log\left((1+\frac{2}{b-1})(1-\frac{2}{b+t+1})\right)\\
			&=\frac{1}{t}(\frac{2}{b-1}-\frac{2}{b+t+1})+O(\frac{1}{tb^2}).
\end{split}
\]

	Thus,
\[
\begin{split}
	\sum_{a\leq s\leq b}\frac{1}{s(s+t)}&=\frac{1}{t}\log\left((1+r)(1-\frac{(b-a)}{b+t})\right)\\
			&=\frac{r}{t}(1-\frac{1}{(1+r)+t/a})+O(\frac{r^2}{t})=\frac{r(r+t/a)}{t(1+r+t/a)}+O(\frac{r^2}{t}).
\end{split}
\]
\end{proof}
	
	With this lemma, we have the following corollary:

\begin{corollary}
	The contribution of the correction term satisfies
\[
\frac{4p^2}{\pi^2}\sum_{i>p^\epsilon}\sum_{(2k+1)\in J_i^x}\sum_{s\in I_j^y}\frac{(-1)^i(\frac{|M_1|s}{p}-x_j)}{s(s+n)}=O(p^{3/2-\delta}\log(p)).
\]
\end{corollary}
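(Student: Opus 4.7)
The plan is to recognize this quantity as essentially a summed, sign-weighted version of the term $E_2$ analyzed in \eqref{first_correct_term}, so that the cancellation already isolated there can be reused verbatim, and only a harmonic sum over $i$ remains to be controlled.

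First, I would take absolute values to remove the alternating factor $(-1)^i$; the point of the corollary is an upper bound of size $p^{3/2-\delta}\log p$, so the sign cancellation in $i$ is not needed. For fixed $i$ with $i>p^\epsilon$, within $I_i^x$ the quantity $\frac{|M_1|s}{p}-x_i$ has constant sign, so the partial sum over odd $y_j\in J_i^x$ and $s\in I_j^y$ is dominated in absolute value by the full sum over $s\in I_i^x$. This brings us to the same inner sum that appeared in \eqref{first_correct_term}.

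Second, I would split the summand into an $\frac{1}{s+n}$ piece and a partial-fraction piece using $\frac{1}{s(s+n)}=\frac{1}{n}\bigl(\frac{1}{s}-\frac{1}{s+n}\bigr)$, and then apply Lemma \ref{approx_lemma} on the interval $a=\frac{pi}{|M_1|}$, $b=\frac{p(i+1)}{|M_1|}$ with $t=n$ and $r=\frac{1}{i}$. Just as in \eqref{first_correct_term}, the two leading logarithmic terms combine to cancel the $O(1/i)$ contribution, leaving each fixed-$i$ contribution of order $O(p^2 n^{-1} i^{-1})$ after multiplying by the $\frac{4p^2}{\pi^2}$ prefactor and accounting for $|M_1|/p$.

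Third, I would sum over $p^\epsilon<i\le|M_1|/2$ to obtain
\[
\frac{4p^2}{\pi^2}\sum_{i>p^\epsilon}O\!\left(\frac{p^2}{ni}\right)=O\!\left(\frac{p^2}{n}\log|M_1|\right)=O(p^{3/2-\delta}\log p),
\]
using $n\sim p^{1/2+\delta}$ and $|M_1|\le\sqrt p$. The only non-routine point is verifying that the leading-order terms from Lemma \ref{approx_lemma} really do cancel for the grouped odd-$j$ subsums (equivalently, that passing from the odd-$y_j$ sub-sum to the full $I_i^x$-sum loses nothing in leading order); this is exactly the cancellation already carried out in \eqref{first_correct_term}, so no new work is required beyond pointing to it.
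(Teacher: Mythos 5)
Your proposal is correct and follows essentially the same route as the paper: dominate the odd-$j$ partial sum by the full sum over $I_i^x$, split via $\frac{1}{s(s+n)}=\frac{1}{n}\bigl(\frac{1}{s}-\frac{1}{s+n}\bigr)$ and apply Lemma \ref{approx_lemma} so the leading terms cancel to give $O(p^2n^{-1}i^{-1})$ for each fixed $i$, then sum over $i>p^\epsilon$ to obtain $O(p^2n^{-1}\log|M_1|)=O(p^{3/2-\delta}\log p)$. The paper additionally isolates the $O(1)$ discrepancy from smoothing $x_i$ to $i$ as a separate term (its \eqref{second_correct_term}), but that piece is of the same order, so your argument matches the paper's in substance.
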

	
\begin{proof}
	Fixing $i$, consider the inner sum
\begin{equation}
\label{first_correct_term}
\begin{split}
p^2\sum_{(2k+1)\in J_i^x}\sum_{s\in I_j^y}\frac{\frac{|M_1|s}{p}-i}{s(s+n)}&\leq\sum_{s\in I_i^x}\frac{\frac{|M_1|s}{p}-i}{s(s+n)}\\
	&=\sum_{s\in I_i^x}p|M_1|\frac{1}{s+n}-\frac{p^2i}{s(s+n)}\\
	&=\sum_{s\in I_i^x}p|M_1|\frac{1}{s+n}-\frac{p^2i}{n}(\frac{1}{s}-\frac{1}{s+n})\\
	&=p|M_1|\frac{\frac{p}{|M_1|}}{\frac{pi}{|M_1|}+n}+O(p|M_1|r^2)+O(|M_1|^2 i^{-1})\\
	&-\frac{p^2i}{n}\left(\frac{1}{i}-\frac{\frac{p}{|M_1|}}{\frac{pi}{|M_1|}+n}\right)+O(p^2n^{-1}x_i^{-1})\\
	&=p|M_1|\frac{1}{i+\frac{|M_1|n}{p}}-\frac{p^2i}{n}\frac{\frac{|M_1|n}{p}}{i(i+\frac{|M_1|n}{p})}+O(p^2n^{-1}x_i^{-1})\\
	&=O(p^2 n^{-1}i^{-1}).
\end{split}
\end{equation}
	Again, we approximated $\{x_i\}_i$ by $\{w_i=i\}_i$. The contribution of the difference is, by Lemma \ref{approx_lemma},
\begin{equation}
\label{second_correct_term}
\begin{split}
	p^2\sum_{i\in2\ZZ+1: i<|M_1|/2}\sum_{(2k+1)\in J_i^x}\sum_{s\in I_j^y}\frac{1}{s(s+n)}&\leq \sum_{i<|M_1|/2}p^2\sum_{s\in I_i^x}\frac{1}{s(s+n)}\\
	&\leq \sum_{i<|M_1|/2}\frac{p^2}{ni}+O(p^2n^{-1}i^{-2})=O(p^2n^{-1}\log(|M_1|)).
\end{split}
\end{equation}
	Combining \eqref{first_correct_term} and \eqref{second_correct_term}, we see that the total contribution is $p^{2-(1/2+\delta)}\log(|M_1|)\sim p^{3/2-\delta}\log(p)$.
\end{proof}
	
	$|M_2|n/p$ will not be an integer unless $n=0$. Suppose for now that $|M_1|\vert|M_2|$. Then we see that there will be $|M_2|/|M_1|-1$ complete y-intervals within. Also, the left and right incomplete y-intervals will combine to have the same length of a complete y-interval. 
	
	Define $g_1(y_j)=\frac{-2p^3x_i}{\pi^2|M_2|\tilde{y}_j^2}, g_2(y_j)=\frac{2np^2x_i}{\tilde{y}_j^2 y_j}, g_3(y_j)=\frac{2p|M_1|}{\pi^2y_j}$. All three terms are decreasing with respect to $y_j$. Thus,
\[
	\sum_{y\in J_i^x}\sum_{s\in I^y_j}\frac{p^u_1(s)p^u_2(s)}{p^l_1(s)p^l_2(s)}\leq\sum_{l=1}^3|\sum_{y_j\in J_i^x}(-1)^{y_j}g_l(y_j)|+\sum_{y_j\in J_i^x}|E(y_j)|
\]
	consists of three alternating series.
	
	Recal that $I_j^y\subset I_i^x\iff y_j\in[\frac{x_i|M_2|}{|M_1|}+\frac{|M_2|n}{p},\frac{x_{i+1}|M_2|}{|M_1|}+\frac{|M_2|n}{p}-1]$, and $\big|J_x^i\big|=|M_2|/|M_1|$. Thus, the case when $|M_2|/|M_1|$ is an even number will be superior to the one with odd numbers.
		
	With the three terms carrying over, we need the following lemma:
	
\begin{lemma}
	Within an $x_i$-interval, the contribution is

\[
	\sum_{s\in I^x_i}\frac{p^u_1(s)p^u_2(s)}{p^l_1(s)p^l_2(s)}=O(p|M_1|i^{-2})+O(p^{3/2-\delta}i^{-1})+\sum_{y\in J_i^x}E_y(j).
\]


\end{lemma}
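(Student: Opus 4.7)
The plan is to use the decomposition given just before the lemma,
\[
\sum_{s\in I_i^x}\frac{p^u_1(s)p^u_2(s)}{p^l_1(s)p^l_2(s)} = \sum_{j\in J_i^x}(-1)^j F(j) + \frac{4p^2}{\pi^2}\sum_{(2k+1)\in J_i^x}\sum_{s\in I_j^y}\frac{\frac{|M_1|s}{p} - x_i}{s(s+n)},
\]
and to handle the two pieces separately. The second piece is exactly the correction term appearing in the preceding Corollary, whose contribution from a single $x_i$-interval with $|i|>p^\epsilon$ was shown there to be $O(p^2 n^{-1} i^{-1}) = O(p^{3/2-\delta} i^{-1})$. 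This supplies the middle term of the claim.

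For the alternating sum $\sum_{j\in J_i^x}(-1)^j F(j)$, I would apply Lemma \ref{y_j_contribute} to expand $F(j)\cdot 4p^2/\pi^2 = G(j) + E_y(j)$ with $G(j) = g_1(j)+g_2(j)+g_3(j)$ collecting the three explicit main terms. The crucial algebraic step is to use the identity $-p/|M_2| + n/j = -\tilde{y}_j/j$ to combine $g_1+g_2 = -2p^2 i/(\pi^2 j\tilde{y}_j)$, which gives
\[
G(j) = \frac{2p}{\pi^2 j}\Bigl(|M_1| - \frac{pi}{\tilde{y}_j}\Bigr).
\]
At the leftmost index $j=a$ of $J_i^x$, the relation $\tilde{y}_a = pi/|M_1|$ forces the bracket to vanish, so $G(a)=0$. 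Writing $j=a+k$ and substituting $\tilde{y}_{a+k} = pi/|M_1| + pk/|M_2|$, a short computation produces
\[
G(a+k) = \frac{2pk|M_1|^2}{\pi^2(a+k)(i|M_2|+k|M_1|)},
\]
which is monotonically increasing on $k \in \{0,1,\ldots,L-1\}$ with $L=|M_2|/|M_1|$, and bounded above by $G(a+L-1) = O(p|M_1|^2/(i^2|M_2|)) \le O(p|M_1|\,i^{-2})$.

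Because $L$ is even by hypothesis, the alternating sum of the monotone sequence $\{G(a+k)\}_{k=0}^{L-1}$ starting from $0$ is controlled by its maximum value $G(a+L-1)$, delivering the contribution $O(p|M_1|\,i^{-2})$. Combined with the trivial bound $|\sum_j (-1)^j E_y(j)| \le \sum_{y\in J_i^x} |E_y(j)|$ for the error, this yields the third term of the lemma. The main obstacle is that a separate monotone-sequence bound on $g_2$ alone would give $\sim n|M_1|^3/(|M_2|\,i^3)$ at worst, exceeding the target $p|M_1|\,i^{-2}$ by a factor $p^{\delta-\sigma}$ — precisely the exponent that the theorem is trying to save — so naive term-by-term bounds on $g_1,g_2,g_3$ are fatally lossy. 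The resolution is the algebraic cancellation that makes $G(a)=0$, together with the evenness of $L$ that allows the paired-index alternating-sum argument.
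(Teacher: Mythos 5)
Your proposal is correct, but it takes a genuinely different route from the paper for the alternating part. The paper bounds the three alternating sums separately: it estimates $\sum_j(-1)^j/\tilde y_j^2$ directly, splits $1/(j\tilde y_j^2)$ by a partial-fraction decomposition in $n$ (coefficients $A,B,C$) before exploiting the sign alternation, and handles $\sum_j(-1)^j/j$ via pairing; the middle term $O(p^{3/2-\delta}i^{-1})$ of the lemma then emerges from the $g_2$ estimate, with the odd-$j$ correction having been treated in the preceding corollary. You instead combine the three main terms of Lemma \ref{y_j_contribute} exactly, using $-\tfrac{p}{|M_2|}+\tfrac{n}{j}=-\tfrac{\tilde y_j}{j}$ to get $G(j)=\tfrac{2p}{\pi^2 j}\bigl(|M_1|-\tfrac{pi}{\tilde y_j}\bigr)$, and attribute the $O(p^{3/2-\delta}i^{-1})$ term solely to the odd-$j$ correction from the corollary's per-$i$ bound $O(p^2n^{-1}i^{-1})$; both architectures are consistent with the statement. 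Your route is arguably cleaner, and in fact stronger than you claim: since $\tilde y_j\in[pi/|M_1|,p(i+1)/|M_1|]$ on $I_i^x$, the bracket lies in $[0,|M_1|/(i+1)]$ and $j\gtrsim i|M_2|/|M_1|$, so $|G(j)|=O\bigl(p|M_1|^2|M_2|^{-1}i^{-2}\bigr)$ pointwise, and even the non-alternating sum over the $|M_2|/|M_1|$ indices gives $O(p|M_1|i^{-2})$ -- the monotonicity, the evenness of $|M_2|/|M_1|$, and the exact vanishing $G(a)=0$ are not actually needed. Two small caveats: the left endpoint of $J_i^x$ is an integer while $i|M_2|/|M_1|+|M_2|n/p$ is not, so $G(a)$ does not vanish exactly (it is merely $O(p|M_1|^3|M_2|^{-2}i^{-2})$, which is harmless and is anyway subsumed by the pointwise bound); and your cancellation identity presupposes the factor $\pi^2$ in the denominator of the term $2np^2i/(\tilde y_j^2 j)$, which is missing in the statement of Lemma \ref{y_j_contribute} but is the correct coefficient as its own derivation shows, so you have in effect silently corrected a typo. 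The paper's term-by-term treatment, by contrast, leans on the sign alternation within each $g_l$ and on the hypothesis that $|M_2|/|M_1|$ is even, and pays for it with the more awkward min-type bounds for $g_2$.
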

	
\begin{proof}
	
	First, note that
	
\[
\begin{split}
	&\Bigg|\sum_{j=\frac{i|M_2|}{|M_1|}+\frac{|M_2|n}{p}+1}^{\frac{{i+1}|M_2|}{|M_1|}+\frac{|M_2|n}{p}}(-1)^{j}\bigg(\frac{-2p^3i}{\pi^2|M_2|\tilde{y}_j^2}+\frac{2np^2i}{\tilde{y}_j^2 j}+\frac{2p|M_1|}{\pi^2j}\bigg)\Bigg|\\
	&\leq|\sum_{s\in I^x_i}\frac{p^u_1(s)p^u_2(s)}{p^l_1(s)p^l_2(s)}|\\
	&\leq\Bigg|\sum_{j=\frac{i|M_2|}{|M_1|}+\frac{|M_2|n}{p}}^{\frac{{i+1}|M_2|}{|M_1|}+\frac{|M_2|n}{p}-1}(-1)^{j}\bigg(\frac{-2p^3i}{\pi^2|M_2|\tilde{y}_j^2}+\frac{2np^2i}{\tilde{y}_j^2 j}+\frac{2p|M_1|}{\pi^2j}\bigg)\Bigg|.
\end{split}
\]
	
	Since $|M_2|/|M_1|\in2\NN$, we can see that, for $\frac{-2p^3 i}{\pi^2|M_2|\tilde{y}_j^2}$,

\[
\begin{split}
	\sum_{j=\frac{i|M_2|}{|M_1|}+\frac{|M_2|n}{p}}^{\frac{{i+1}|M_2|}{|M_1|}+\frac{|M_2|n}{p}-1}\frac{(-1)^{j}}{\tilde{y}_j^2}&=\frac{|M_2|^2}{p^2}\sum_{j}\frac{(-1)^{j}}{(j-\frac{|M_2|n}{p})^2}\\
			&=\frac{|M_2|^2}{p^2}\sum_{z_j=\frac{i|M_2|}{|M_1|}}^{\frac{{i+1}|M_2|}{|M_1|}}\frac{(-1)^{j}}{z_j^2}\\
			&\leq\frac{|M_2|^2}{2p^2}\bigg[\bigg(\frac{|M_1|}{i|M_2|}-\frac{|M_1|}{(i+1)|M_2|}\bigg)-\bigg(\frac{1}{\frac{i|M_2|}{|M_1|}+1}-\frac{1}{\frac{(i+1)|M_1|}{|M_2|}+1}\bigg)\bigg]\\
			&=\frac{|M_2|^2}{2p^2}\bigg[\frac{1}{\frac{i|M_2|}{|M_1|}(\frac{i|M_2|}{|M_1|}+1)}-\frac{1}{\frac{(i+1)|M_2|}{|M_1|}(\frac{(i+1)|M_2|}{|M_1|}+1)}\bigg]\\
			&=\frac{|M_2|^2}{2p^2}\bigg[\frac{\frac{|M_2|}{|M_1|}(\frac{i|M_2|}{|M_1|}+\frac{i|M_2|}{|M_1|}+1)+\frac{|M_2|^2}{|M_1|^2}}{\frac{i|M_2|}{|M_1|}(\frac{i|M_2|}{|M_1|}+1)\frac{(i+1)|M_2|}{|M_1|}(\frac{(i+1)|M_2|}{|M_1|}+1)}\bigg]\\
			&=O(p^{-2}|M_1|^2 i^{-3}).
\end{split}
\]

	For $\frac{2np^2 i}{\tilde{y}_j^2 j}$,

\[
	\sum_{j\in J_i^x}\frac{(-1)^{j}}{j\tilde{y}_j^2}=\sum_{j\in J_i^x}(-1)^{j}\bigg[\frac{Aj+B}{\tilde{y}_j^2}+\frac{C}{j}\bigg]
\]
	where $A, B, C$ satisfy
\[
	C\tilde{y}_j^2+Aj^2+Bj=1\implies A=\frac{-p^2}{|M_2|^2 n^2},\quad B=\frac{2p}{n|M_2|},\quad C=\frac{1}{n^2}.
\]
	
	Thus, we have
\[
\begin{split}
	\sum_{j}\frac{(-1)^{j}}{j\tilde{y}_j^2}&=\sum_{j}(-1)^{j}\bigg[\frac{\frac{-p}{|M_2|n^2}\tilde{y}_j+\frac{p}{n|M_2|}}{\tilde{y}_j^2}+\frac{1}{n^2j}\bigg]\\
		&=\sum_{j}(-1)^{j}\bigg[\frac{-p}{|M_2|n^2}\frac{1}{\tilde{y}_j}+\frac{1}{n^2 j}+\frac{p}{n|M_2|}\frac{1}{\tilde{y}_j^2}\bigg]\\
		&\sim\frac{|M_2|}{p}\frac{-p}{|M_2|n^2}\bigg[\log(\frac{i+1}{i})-\log(\frac{\frac{(i+1)|M_2|}{|M_1|}+1}{\frac{i|M_2|}{|M_1|}+1})\bigg]\\
		&+\frac{1}{n^2}\bigg[\log(\frac{\frac{(i+1)|M_2|}{|M_1|}+\frac{|M_2|n}{p}}{\frac{i|M_2|}{|M_1|}+\frac{|M_2|n}{p}})-\log(\frac{\frac{(i+1)|M_2|}{|M_1|}+\frac{|M_2|n}{p}+1}{\frac{i|M_2|}{|M_1|}+\frac{|M_2|n}{p}+1})\bigg]+O(p^{-1}|M_1|^2|M_2|^{-1}n^{-1}i^{-3})\\
		&\sim\frac{1}{n^2}\bigg[-\frac{1}{i}+\frac{1}{i+\frac{|M_1|}{|M_2|}}+\frac{1}{i+\frac{|M_1|n}{p}}-\frac{1}{i+\frac{|M_1|n}{p}+\frac{|M_1|}{|M_2|}}\bigg]+O(p^{-1}|M_1|^2|M_2|^{-1}n^{-1}i^{-3})\\
		&=\frac{|M_1|}{n^2|M_2|}\bigg[\frac{(2i+\frac{|M_1|}{|M_2|})\frac{|M_1|n}{p}+\frac{|M_1|^2 n^2}{p^2}}{i(i+\frac{|M_1|}{|M_2|})(i+\frac{|M_1|n}{p})(i+\frac{|M_1|n}{p}+\frac{|M_1|}{|M_2|})}\bigg]\\
		&=\frac{|M_1|}{|M_2|n^2}\frac{O(\frac{i|M_1|n}{p})+O(\frac{|M_1|^2n^2}{p^2})}{O(i^4)+O(i^2\frac{|M_1|^2n^2}{p^2})}.
\end{split}
\]

	For $\frac{2p|M_1|}{\pi^2 j}$, by letting $2a=\frac{i|M_2|}{|M_1|}+\frac{|M_2|n}{p}$, $2b=\frac{(i+1)|M_2|}{|M_1|}+\frac{|M_2|n}{p}$, and $t=1/2$, we have
\[
\begin{split}
	\sum_{j\in J_i^x}\frac{(-1)^{j}}{j}&=\sum_{(2k)\in J_i^x}(\frac{1}{2k}-\frac{1}{2k+1})\\
				&=\sum_{(2k)\in J_i^x}\frac{1}{2k(2k+1)}\\
				&=\frac{\frac{1}{i+\frac{|M_1|n}{p}}\left(\frac{1}{i+\frac{|M_1|n}{p}}+\frac{1}{\frac{i|M_2|}{|M_1|}+\frac{|M_2|n}{p}}\right)}{1+\frac{1}{i+\frac{|M_1|n}{p}}+\frac{1}{\frac{i|M_2|}{|M_1|}+\frac{|M_2|n}{p}}}+O(r^2)\\
				&\leq \frac{1}{(i+\frac{|M_1|n}{p})^2}+O(r^2)=O(\frac{1}{i^2}).
\end{split}
\]

	Combining the three terms, we see that
\[
\begin{split}
	&\sum_{j\in J_i^x}(-1)^{j}\bigg(\frac{-2p^3i}{\pi^2|M_2|\tilde{y}_j^2}+\frac{2np^2i}{\tilde{y}_j^2 j}+\frac{2p|M_1|}{\pi^2j}\bigg)\\
	&=O(p|M_1|^2|M_2|^{-1}i^{-2})+\min\{O(i^{-2}p^{3/2-\sigma}), O(p^{3/2-2\delta+\sigma})\}\\
	&+\min\{O(i^{-3}p^{3/2-2\sigma+\delta}), O(p^{3/2-\delta}i^{-1})\}+O(p|M_1|^2|M_2|^{-1}i^{-2})\\
	&=O(p|M_1|i^{-2})+O(p^{3/2-\delta}i^{-1}).
\end{split}
\]

\end{proof}

\subsection{Main Terms Estimates}

	Now, we are prepared to prove our theorem.

\begin{proof} of Theorem \ref{auto_corr_main}:

	The calculations above accounts for most of the intervals, but we need be more careful around the singularities of $p_1^l$ and $p_2^l$, namely $s\sim0$ and $s\sim -n$.
	
	Let's suppose $|x_i|\leq p^\epsilon$ and $|M_1|\sim p^{1/2-\sigma}$, $\sigma\in(0,1/2)$, then $|s|\leq\frac{px_i}{|M_1|}\sim p^{1/2+\epsilon+\sigma}$. Note that $n\sim p^{1/2+\delta}$ where $\delta>\epsilon+\sigma$, $\delta\in(0,1/2)$. Thus,

\begin{equation}
\begin{split}
	\sum_{|s|\leq p^{1/2+\epsilon+\sigma}}|\frac{\sin(\pi|M_1|s/p)}{\sin(\pi s/p)}||\frac{\sin(\pi|M_2|(s+n)/p)}{\sin(\pi (s+n)/p)}|&\leq |M_1|\sum_{|s|\leq p^{1/2+\epsilon+\sigma}}\frac{p}{\pi(s+n)}\\
			&\leq p|M_1|\log(\frac{n+p^{1/2+\epsilon+\sigma}}{n-p^{1/2+\epsilon+\sigma}})\\
			&=p|M_1|\log(1+\frac{1}{np^{-1/2-\epsilon-\sigma}-1})\\
			&\sim p|M_1|p^{-\delta+\epsilon+\sigma}\sim p^{3/2-\delta+\epsilon}.
\end{split}
\end{equation}

	Around the singular point $s=-n$, we make sure to take out an even number of $y_j$-intervals so the cancellations still occur in the remaining $x_i$-interval. Thus, the summation range is $|s+n|\leq \frac{kp}{|M_2|}$ for some $k\in\NN$. Then,
	
\begin{equation}
\begin{split}
	\sum_{|s+n|\leq \frac{p}{|M_2|}}|\frac{\sin(\pi|M_1|s/p)}{\sin(\pi s/p)}||\frac{\sin(\pi|M_2|(s+n)/p)}{\sin(\pi (s+n)/p)}|&\leq |M_2|\sum_{|s+n|\leq p^{1/2+\epsilon+\sigma}}\frac{p}{\pi |s|}\\
			&\leq p|M_2|\log(\frac{n+p/|M_2|}{n-p/|M_2|})\\
			&=p|M_2|\log(1+\frac{1}{n|M_2|/p-1})\\
			&\sim p^{2-1/2-\delta+\epsilon}=p^{3/2-\delta+\epsilon}.
\end{split}
\end{equation}

	For $|s|\geq p^{1/2+\epsilon+\sigma}\implies x_i\geq p^{\epsilon}$, we have
\[
	\sum_{x_i\geq p^{\epsilon}}O(p|M_1|x_i^{-2})+O(p^{3/2-\delta}x_i^{-1})+E(x_i)=O(p^{3/2-\sigma-\epsilon})+O(p^{3/2-\delta}\log p).
\]
	Thus, adding the two parts, we get
\[
	O(p^{3/2-\sigma-\epsilon})+O(p^{3/2-\delta}\log p)+O(p^{1/2-\delta+\epsilon})=O(p^{3/2-\alpha_\epsilon}),
\]
	where $\alpha_\epsilon=\min\{\epsilon+\sigma, \delta-\epsilon\}$. Now, since $\epsilon$ is arbitrary, we can optimize $\alpha$ to be $\sigma+(\delta-\sigma)/2$. 
	
	For different components of $p_1^l(s), p_2^l(s)$, the same arguments work verbatim by re-enumerate the $x_i$ and $y_j$-intervals, so the same estimate holds. Note that $n\sim p^{1/2+\delta}$ where $\delta\in(0,1/2)$, so $p-n\sim p$.
	
\end{proof}


\section{Error Terms}\label{error_term}

	In this section, we show that the contributions from $f_1, f_2, f_3, f_4, f_4'$ are all negligible. In increasing order of difficulty, we shall start with $f_4'$ and end with $f_1$. The remaining error terms can be summed trivially over $J_i^x$ and $\{i:i\geq p^\epsilon\}$, and the proof will be omitted.
.

\subsection{Estimates for $f_4'$}

	First, we note that
\begin{equation}
\label{f_4_prime_expansion}
\begin{split}
	\frac{4p^{2}ij}{3n\pi^2}f_4'(j)&=\frac{4p^{2}ij}{3n\pi^2}\bigg(\frac{p^3}{|M_2|^3}\frac{1}{\tilde{y}_j^3}-\frac{1}{j^3}\bigg)\\
		&=\frac{4p^2ij}{3n\pi^2}\bigg(\frac{3\frac{p^2}{|M_2|^2}j^2n-3\frac{p}{|M_2|}jn^2+n^3}{\tilde{y}_j^3 j^3}\big)\\
		&=O(p^4|M_2|^{-2}ij^{-1}\tilde{y}_j^{-3})+O(p^3|M_2|^{-1}nij^{-2}\tilde{y}_j^{-3})+O(p^2n^2ij^{-3}\tilde{y}_j^{-3}).
\end{split}
\end{equation}

\begin{lemma}
\label{f_4_prime_sum}
	For fixed integers $l,k>0$, one has
\[
	\sum_{j\in J_i^x}\frac{1}{j^l\tilde{y}_j^k}=O(\min_{0\leq s\leq k}\{(|M_2|^{-s}|M_1|^s i^{-s})\big(\frac{p}{|M_2|n}\big)^{l-s}\}p^{-k}i^{-k}|M_2|^{1}|M_1|^{k-1}),
\]
	where the constant depends on $l, k$. 
\end{lemma}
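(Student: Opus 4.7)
The plan is to reduce the two-factor denominator to a single variable using a translation that places the singularity of $\tilde y_j$ at the origin. Writing $\tilde y_j = (p/|M_2|)(j - n|M_2|/p)$, I set $u := j - n|M_2|/p$, so that $\tilde y_j = pu/|M_2|$ and $j = u + n|M_2|/p$. Recalling that $J_i^x = [x_i|M_2|/|M_1| + n|M_2|/p,\ x_{i+1}|M_2|/|M_1| + n|M_2|/p - 1]$, as $j$ traverses $J_i^x$ the new variable $u$ traverses an interval of length $|M_2|/|M_1|$ with $u \gtrsim i|M_2|/|M_1|$ (I treat $i>0$; the negative case is symmetric, and $n|M_2|/p>0$ under the running hypothesis $n\sim p^{1/2+\delta}$).

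The heart of the argument is the elementary inequality $(a+b)^l \geq a^s b^{l-s}$ for $a,b > 0$ and $0 \leq s \leq l$ (a single term of the binomial expansion), applied with $a = u$ and $b = n|M_2|/p$. This yields, for any $0 \leq s \leq l$,
\[
\frac{1}{j^l \tilde y_j^{\,k}} \;=\; \frac{|M_2|^k}{p^k\, u^k\,(u + n|M_2|/p)^l} \;\leq\; \frac{|M_2|^k}{p^k\, u^{k+s}\,(n|M_2|/p)^{l-s}}.
\]
The parameter $s$ interpolates between the two natural pointwise bounds: $s=0$ treats $j\sim n|M_2|/p$ (good for small $i$) while $s=l$ treats $j\sim u \gtrsim i|M_2|/|M_1|$ (good for large $i$). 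Next I would sum $1/u^{k+s}$ over the shifted interval by the trivial length-times-largest-summand estimate,
\[
\sum_u \frac{1}{u^{k+s}} \;\lesssim\; \frac{|M_2|/|M_1|}{(i|M_2|/|M_1|)^{k+s}} \;=\; \frac{|M_1|^{k+s-1}}{i^{k+s}|M_2|^{k+s-1}},
\]
and multiplying the three pieces and regrouping produces exactly
\[
\left(\frac{|M_1|^s}{|M_2|^s\, i^s}\right)\left(\frac{p}{|M_2|n}\right)^{l-s} \cdot \frac{|M_2|\,|M_1|^{k-1}}{p^k\, i^k},
\]
which is the claimed bound. Taking the infimum over $s$ then finishes the proof; for $s > l$ the factor $(p/(|M_2|n))^{l-s}$ exceeds $1$ and only worsens the estimate, so extending the minimum to the range $0\leq s\leq k$ as in the statement is harmless.

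There is essentially no analytic obstacle: the argument is one pointwise inequality followed by a trivial sum. The only reason to state the bound as a minimum over $s$ is that different values of $s$ are the sharp choice in different regimes of $i$, and packaging them together lets Proposition~\ref{error_f_prop} invoke whichever instance is sharpest without reproving the estimate each time.
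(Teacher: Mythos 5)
Your argument is correct and delivers exactly the stated bound, but by a genuinely different route than the paper. The paper estimates the two factors separately: it evaluates $\sum_{j\in J_i^x}\tilde{y}_j^{-k}\sim |M_2||M_1|^{k-1}p^{-k}i^{-k}$ and $\sum_{j\in J_i^x}j^{-l}$ by comparison with telescoping/integral expressions, inserts $1/\tilde{x}_i=O(\min\{|M_1|/(|M_2|i),\,p/(|M_2|n)\})$, and then recombines the two one-factor estimates with H\"older's inequality (with exponents chosen so the length $|M_2|/|M_1|$ of $J_i^x$ is not counted twice). You instead shift the variable so that $\tilde{y}_j=pu/|M_2|$, interpolate pointwise via $(u+n|M_2|/p)^l\ge u^s(n|M_2|/p)^{l-s}$, and finish with the trivial length-times-largest-term bound. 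This eliminates both the integral comparison and the H\"older step, and the bookkeeping is cleaner; what the paper's version buys is the explicit asymptotics for $\sum_j\tilde{y}_j^{-k}$, a computation it reuses in the $x_i$-interval estimates.

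One caveat: your pointwise inequality requires $0\le s\le l$, so what you actually prove is the bound with the minimum taken over $0\le s\le l$, and your closing remark that extending to $0\le s\le k$ is ``harmless'' is not right as stated. For $i\gg |M_1|n/p$ the expression is decreasing in $s$ (the factor $(|M_1|/(|M_2|i))^s$ shrinks faster than $(p/(|M_2|n))^{l-s}$ grows), so an instance with $s>l$ would assert a strictly stronger bound --- one that neither your argument nor the paper's establishes, and which can in fact fall below the true size of the sum, since all $\lfloor |M_2|/|M_1|\rfloor$ terms are comparable to $|M_1|^kp^{-k}i^{-k}\tilde{x}_i^{-l}$ (take $|M_1|=|M_2|$, $i\sim|M_1|$, $k>l$). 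In context this is immaterial: the paper's own proof likewise only controls $s$ up to the exponent on $j$, and Proposition~\ref{error_f_prop}(a) invokes the lemma only with $s=0,1,2\le l$, so the stated range should simply be read as $0\le s\le\min\{l,k\}$.
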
	
		
	With Lemma \ref{f_4_prime_sum}, we can prove Proposition \ref{error_f_prop} (a).

\begin{proof} of Proposition \ref{error_f_prop} (a):

	From \eqref{f_4_prime_expansion}, we can use Lemma \ref{f_4_prime_sum}, choosing the parameter $s$ to be $0,1,2$ respectively for the three terms. Noting that $|M_1|\leq|M_2|$, we get the desired estimate bound.

\end{proof}
		
\begin{proof} of Lemma \ref{f_4_prime_sum}:

\[
\begin{split}
	\sum_{j\in J_i^x}\frac{1}{\tilde{y}_j^k}&\sim\frac{|M_2|}{p}(\frac{|M_1|}{p})^{k-1}\bigg[\frac{1}{i^{k-1}}-\frac{1}{(i+1)^{k-1}}\bigg]\\
			&=\frac{|M_2||M_1|^{k-1}}{p^k}\frac{(i+1)^{k-1}-i^{k-1}}{(i(i+1))^{k-1}}\\
			&\sim\frac{|M_2||M_1|^{k-1}}{x_i^kp^k},
\end{split}
\]	
	where we note that $x_{i+1}=x_i+1$. For the second equation, denoting $\frac{i|M_2|}{|M_1|}+\frac{|M_2|n}{p}$ by $\tilde{x}_i$, we have
\[
\begin{split}
	\sum_{j\in J_i^x}\frac{1}{j^k}&\sim\frac{(\tilde{x}_i+1)^{k-1}-\tilde{x}_i^{k-1}}{(\tilde{x}_i\tilde{x}_{i+1})^{k-1}}\\
			&\sim O(\min_{0\leq s\leq k}\{(|M_2|^{-s}|M_1|^s i^{-s})\big(\frac{p}{|M_2|n}\big)^{k-s}\}),
\end{split}
\]	
	where we note that 
\[\frac{1}{\tilde{x_i}}=O(\min\{\frac{|M_1|}{|M_2|i},|\frac{p}{|M_2|n}|\}).\]
	
	Now, by H\"{o}lder's inequality, we can derive the result.
	
\end{proof}


\subsection{Estimates for $f_3$ and  $f_4$}
	
	We are going to use the comparison lemma: If $f(x)$ is monotone, then
\[
	|\sum_{x=a}^b f(x)-\int_{a-1}^b f(t)\, dt|\leq |\int_{a-1}^b f(t)\, dt-\int_{a}^{b+1}f(t)\,dt|.
\]

\begin{lemma}
\label{f_estimate}
	The following statements are true:
\begin{itemize}
\item	$f_3(j)=O(|M_2|p^{-1}j^{-2})$,
\item	$f_4(j)=O(|M_2|^2p^{-2}j^{-3})$.
\end{itemize}
	The constant of the big-O notation is independent of $|M_2|$ and $p$.
\end{lemma}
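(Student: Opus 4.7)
The plan is to identify $f_3$ and $f_4$ as the discretization errors appearing in the proof of Lemma \ref{y_j_contribute}, namely
\[
f_3(j) = \sum_{s \in I_j^y} \frac{1}{s} - \log\!\frac{\tilde{y}_{j+1}}{\tilde{y}_j}, \qquad f_4(j) = \sum_{s \in I_j^y}\Bigl(\frac{1}{s} - \frac{1}{s+n}\Bigr) - \Bigl[\log\!\frac{\tilde{y}_{j+1}}{\tilde{y}_j} - \log\!\frac{\tilde{y}_{j+1}+n}{\tilde{y}_j+n}\Bigr],
\]
and bound each by applying the monotone comparison lemma stated just before the statement. In both cases I restrict attention to the regime where $j$ is large enough that $\tilde{y}_j$ is well-separated from $0$ (the singular region having been peeled off elsewhere in the argument), so that the sampled integrands are monotone on $I_j^y$.

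For $f_3$, I apply the comparison lemma to $f(t) = 1/t$, yielding
\[
|f_3(j)| \leq \Bigl|\int_{a-1}^a \frac{dt}{t} - \int_b^{b+1} \frac{dt}{t}\Bigr| = \Bigl|\log\!\tfrac{a}{a-1} - \log\!\tfrac{b+1}{b}\Bigr|,
\]
where $a, b$ are the extreme integers in $I_j^y$. Taylor-expanding each logarithm to leading order $1/(a-1)$ and $1/b$, this difference equals $(b-a)/(ab)$ up to lower-order terms. Substituting $b - a \sim p/|M_2|$ and $\tilde{y}_j \sim pj/|M_2|$ (valid precisely on the regime of interest) gives the claimed bound.

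For $f_4$, the same lemma applied to $h(t) = 1/t - 1/(t+n)$ gives $|f_4(j)| \leq \bigl|\int_{a-1}^a h \, dt - \int_b^{b+1} h \, dt\bigr|$. The key is to keep the two partial-fraction pieces together: after expanding each logarithm, each integral reduces at leading order to $n/[x(x+n)]$ with $x = a-1$ and $x = b$ respectively, and subtracting yields
\[
\frac{n(b-a+1)(a+b-1+n)}{(a-1)(a-1+n)\, b(b+n)}
\]
up to higher-order terms. The denominator is $\sim (pj/|M_2|)^4$ via the exact identity $\tilde{y}_j + n = pj/|M_2|$, while the numerator is of order $n \cdot (p/|M_2|) \cdot (pj/|M_2|)$, so the ratio gives the desired order.

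The main obstacle is respecting the cancellation between the two pieces of $h$: a naive triangle inequality separating $\sum 1/s$ and $\sum 1/(s+n)$ would only yield $|f_4| \leq 2|f_3| = O(|M_2|/(pj^2))$, losing the crucial extra factor of $1/j$. Combining the two endpoint contributions algebraically before taking absolute values, and exploiting the exact cancellation $\tilde{y}_j + n = pj/|M_2|$ that makes the denominator of the combined expression grow like $\tilde{y}_j^4$ rather than the product $\tilde{y}_j^2 \cdot (\tilde{y}_j + n)^2$ one might naively estimate, is precisely what converts the $O(1/j^2)$ bound for $f_3$ into the sharper $O(1/j^3)$ bound for $f_4$.
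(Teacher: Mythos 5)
Your reconstruction is essentially the paper's own proof: you identify $f_3$ and $f_4$ as the same discretization errors, invoke the same monotone comparison lemma, and for $f_4$ you keep the two partial-fraction pieces together before taking absolute values so that the cancellation coming from $\tilde{y}_j+n=pj/|M_2|$ upgrades the decay — exactly the grouping the paper performs when it expands the four logarithms and combines them into the single fraction $\frac{p}{|M_2|}\,\frac{n(\tilde{y}_j+\tilde{y}_{j+1})+(n+1)^2-1}{(\tilde{y}_{j+1}+1)(\tilde{y}_j+1)(\tilde{y}_j+n+1)(\tilde{y}_{j+1}+n+1)}$. One caveat, which you share with the paper rather than introduce: the ratio you display for $f_4$ actually evaluates to order $n\,(p/|M_2|)\,(pj/|M_2|)^{-3}=n|M_2|^2p^{-2}j^{-3}$, not $|M_2|^2p^{-2}j^{-3}$, so the claim that it ``gives the desired order'' silently drops a factor of $n$; the paper's last line makes the identical leap. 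The discrepancy is harmless for the way the lemma is used, since $f_4$ only ever appears multiplied by $\frac{4p^2ij}{n\pi^2}$ and the $1/n$ absorbs the extra factor, but if you want the lemma as literally stated you would need to either carry the $n$ in the bound or redefine $f_4$ without the $n$-rescaling. Your explicit restriction to the regime where $\tilde{y}_j\sim pj/|M_2|$ (singular region peeled off) is the same implicit assumption the paper makes and is fine to state.
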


\begin{proof}

	For $f_3$, we have that
\[
\begin{split}
	|f_3(j)|&\leq|\log(\frac{\tilde{y}_{j+1}}{\tilde{y}_j})-\log(\frac{\tilde{y}_{j+1}+1}{\tilde{y}_j+1})|\\
		&=|\log(\frac{\tilde{y}_{j+1}(\tilde{y}_{j+1}+1)}{\tilde{y}_j(\tilde{y}_{j+1}+1)})|\\
		&=|\log(1+\frac{\tilde{y}_{j+1}-\tilde{y}_j}{\tilde{y}_j(\tilde{y}_{j+1}+1)})|\\
		&=|\log(1+\frac{p}{|M_2|}\frac{1}{\frac{p^2 j^2}{|M_2|^2}-2\frac{pjn}{|M_2|}+n^2+\frac{pj}{|M_2|}(\frac{p}{|M_2|}+1)})|\\
		&=O(|M_2|p^{-1}j^{-2}).
\end{split}
\]

	For $f_4$, note that $\frac{1}{s(s+n)}=\frac{1}{n}(\frac{1}{s}-\frac{1}{s+n})$ is monotone.
\[
\begin{split}
	|f_4(y_j)|&\leq |\log(\frac{\tilde{y}_{j+1}(\tilde{y}_j+n)}{\tilde{y}_j(\tilde{y}_{j+1}+n)})-\log(\frac{(\tilde{y}_{j+1}+1)(\tilde{y}_j+n+1)}{(\tilde{y}_j+1)(\tilde{y}_{j+1}+n+1)})|\\
		&=|\log(1-\frac{1}{\tilde{y}_{j+1}+1})+\log(1-\frac{1}{\tilde{y}_{j}+(n+1)})-\log(1-\frac{1}{\tilde{y}_j+1})-\log(1-\frac{1}{\tilde{y}_{j+1}+(n+1)})|\\
		&=|(-\frac{1}{\tilde{y}_{j+1}+1}-\frac{1}{\tilde{y}_{j}+(n+1)}+\frac{1}{\tilde{y}_j+1}+\frac{1}{\tilde{y}_{j+1}+(n+1)})\\
		&+\frac{1}{2}(-\frac{1}{(\tilde{y}_{j=1}+1)^2}-\frac{1}{(\tilde{y}_j+(n+1))^2}+\frac{1}{(\tilde{y}_j+1)^2}+\frac{1}{(\tilde{y}_{j+1}+(n+1))^2})|+O(\frac{1}{\tilde{y}_j^3})\\
		&=|\bigg(\frac{p/|M_2|}{(\tilde{y}_{j+1}+1)(\tilde{y_j}+1)}-\frac{p/|M_2|}{(\tilde{y}_j+(n+1))(\tilde{y}_{j+1}+(n+1))}\bigg)\\
		&+\frac{1}{2}\bigg(\frac{2\frac{p}{|M_2|}(\frac{pj}{|M_2|}+1-n)+\frac{p^2}{|M_2|^2}}{(\tilde{y}_j+1)^2(\tilde{y}_{j+1}+1)^2}-\frac{1}{2}\bigg(\frac{2\frac{p}{|M_2|}(\frac{pj}{|M_2|}+1)+\frac{p^2}{|M_2|^2}}{(\tilde{y}_j+1+n)^2(\tilde{y}_{j+1}+1+n)^2}\bigg)|+O(\tilde{y}_j^3)\\
		&=|\frac{p}{|M_2|}\frac{n(\tilde{y_j}+\tilde{y}_{j+1})+(n+1)^2-1}{(\tilde{y}_{j+1}+1)(\tilde{y}_j+1)(\tilde{y}_j+(n+1))(\tilde{y}_{j+1}+(n+1))}|+O(|M_2|^2p^{-2}j^{-3})+O(|M_2|^3p^{-3}j^{-3})\\
		&=O(|M_2|^2p^{-2}j^{-3}).
\end{split}
\]
\end{proof}

\begin{proposition}
\[
	\sum_{i=-|M_1|/2}^{|M_1|/2}\sum_{j\in J_i^x}\frac{4p|M_2|i}{\pi^2}|f_3(j)|+\frac{4p^2ij}{n\pi^2}|f_4(j)|=O(|M_1||M_2|\log p).
\]
\end{proposition}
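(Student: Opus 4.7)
\emph{Proposal.} The plan is to substitute the pointwise bounds from Lemma~\ref{f_estimate}, reducing the task to estimating $|M_2|^2\sum_i|i|\sum_{j\in J_i^x} j^{-2}$, and then to split the sum on $i$ into two regimes according to which of the two terms $i|M_2|/|M_1|$ and $|M_2|n/p$ controls the lower endpoint of $J_i^x$.

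First, applying Lemma~\ref{f_estimate}, the summand satisfies
\[
\frac{4p|M_2||i|}{\pi^2}|f_3(j)| + \frac{4p^2|i| j}{n\pi^2}|f_4(j)| = O\Bigl(\frac{|M_2|^2|i|}{j^2}\Bigr) + O\Bigl(\frac{|M_2|^2|i|}{nj^2}\Bigr),
\]
so since $n \geq 1$ the second term is dominated by the first, and it suffices to show that $|M_2|^2 \sum_{|i|\leq|M_1|/2}|i|\sum_{j\in J_i^x} j^{-2} = O(|M_1||M_2|\log p)$. By the symmetry $i \mapsto -i$ (together with $n \mapsto p-n$ if needed to keep $n\leq p/2$) it suffices to consider $i\geq 0$, for which $J_i^x$ is the set of integers in $[a_i,b_i]$ with $a_i \sim i|M_2|/|M_1| + |M_2|n/p$ and $b_i - a_i = |M_2|/|M_1|$. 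Since $1/j^2$ is monotone, a trivial bound gives
\[
\sum_{j\in J_i^x} \frac{1}{j^2} = O\Bigl(\frac{|M_2|/|M_1|}{a_i^2}\Bigr).
\]

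Next, introduce the threshold $i_\ast := |M_1|n/p$, which satisfies $i_\ast \sim p^{\delta-\sigma} \gg 1$ by the hypothesis $\delta > \sigma$. For $i \geq i_\ast$ we have $a_i \sim i|M_2|/|M_1|$, so the per-$i$ contribution is
\[
|M_2|^2 \cdot i \cdot \frac{|M_2|/|M_1|}{(i|M_2|/|M_1|)^2} = \frac{|M_1||M_2|}{i},
\]
and summing over $i_\ast \leq i \leq |M_1|/2$ yields $O(|M_1||M_2|\log(|M_1|/(2i_\ast))) = O(|M_1||M_2|\log p)$. For $0 \leq i < i_\ast$ we have $a_i \sim |M_2|n/p$, so the per-$i$ contribution is $|M_2|^2 \cdot i \cdot \frac{|M_2|/|M_1|}{(|M_2|n/p)^2} = \frac{|M_2|p^2 i}{|M_1|n^2}$, and summing over $0\leq i < i_\ast$ gives at most $\frac{|M_2|p^2}{|M_1|n^2}\cdot i_\ast^2 = O(|M_1||M_2|)$. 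Combining both regimes (and the symmetric negative-$i$ half) yields $O(|M_1||M_2|\log p)$ as claimed.

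The one delicate point is the behavior of $j$ near zero: if $i$ is close to $-i_\ast$, then $J_i^x$ may contain $j = 0$, which corresponds to the $y_j$-interval around $s = -n$; this is excluded from the original sum and handled separately in the proof of Proposition~\ref{est_summary}(a). After removing this single interval one may assume $|j| \geq 1$ throughout, and the estimates above apply uniformly. I expect the bookkeeping around this excluded singularity to be the only real obstacle; otherwise the proof reduces to routine monotone-summation estimates in the spirit of Lemma~\ref{approx_lemma}.
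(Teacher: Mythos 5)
Your proposal is correct and follows essentially the same route as the paper's proof: substitute the bounds of Lemma~\ref{f_estimate} so the summand is $O(|M_2|^2|i|j^{-2})$, bound the inner sum $\sum_{j\in J_i^x}j^{-2}$ by a monotone/telescoping estimate over the interval of length $|M_2|/|M_1|$, and then sum the resulting $|M_1||M_2|/i$ over $i$ to pick up the $\log p$. The only difference is bookkeeping: the paper simply drops the positive term $|M_2|n/p$ from the endpoints to get $\sum_{j\in J_i^x}j^{-2}=O\bigl(|M_1|/(|M_2|i^2)\bigr)$ uniformly for $i\geq 1$, so your two-regime split at $i_\ast=|M_1|n/p$ is not needed (though harmless), and your remark about the $j\approx 0$ intervals corresponds to what the paper handles only implicitly, by working with positive indices and excluding neighborhoods of the singularities in the $E_1$ term.
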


\begin{proof}
	Note that, by Lemma \ref{f_estimate}, 
\[
-\frac{4p|M_2|i}{\pi^2}f_3(j)+\frac{4p^2ij}{n\pi^2}f_4(j)=O(|M_2|^2ij^{-2})+O(|M_2|^2n^{-1}ij^{-2})=O(|M_2|^2ij^{-2}). 
\]

	Now, 
\begin{equation}
\begin{split}
	\sum_{j\in J_i^x}\frac{1}{j^2}
		&\sim\frac{1}{\frac{i|M_2|}{|M_1|}+\frac{|M_1|n}{p}}-\frac{1}{\frac{(i+1)|M_2|}{|M_1|}+\frac{|M_1|n}{p}}\\
		&=\frac{|M_2|/|M_1|}{(\frac{i|M_2|}{|M_1|}+\frac{|M_1|n}{p})(\frac{(i+1)|M_2|}{|M_1|}+\frac{|M_1|n}{p})}\\
		&=O(\frac{|M_1|}{|M_2|i^2}).
\end{split}
\end{equation}

	Then, summing over all possible $x_i$, we see that
\[
	\sum_{x=1}^{|M_1|}|M_1||M_2|\frac{x}{x^2}\sim|M_1||M_2|\log p,
\]
	which concludes the proof.
\end{proof}


\subsection{Estimates for $f_2$}

\begin{proposition}
\[
\sum_{i=-|M_1|/2}^{|M_1|/2}\sum_{j\in J_i^x}\frac{4p|M_1|j}{\pi^2}f_2(j)=O(|M_1||M_2|\log |M_2|).
\]

\end{proposition}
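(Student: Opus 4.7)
From equation \eqref{first_extract}, $f_2(j)$ is implicitly defined by
\[
\sum_{s \in I_j^y}\frac{1}{s+n} = \log\!\left(\frac{j+1}{j}\right) + f_2(j).
\]
Substituting $t=s+n$ converts the left side into $\sum_{t}\frac{1}{t}$ with $t$ ranging over integers in $[jN,(j+1)N]$, where $N := p/|M_2|$; the natural integral approximation $\int_{jN}^{(j+1)N}\frac{dt}{t}=\log\frac{j+1}{j}$ already matches the main term, so $f_2(j)$ captures only the discrepancy between the Riemann sum and the integral.

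An Euler--Maclaurin expansion (or, equivalently, the paper's comparison lemma combined with the asymptotic $H_m = \log m + \gamma + \frac{1}{2m} + O(m^{-2})$) produces
\[
f_2(j) = \frac{1-\alpha_j}{jN} - \frac{1-\alpha_{j+1}}{(j+1)N} + O\!\left(\frac{1}{(jN)^2}\right),
\]
where $\alpha_k := \lceil kN\rceil - kN \in [0,1)$ is the fractional offset of the left endpoint of $I_k^y$. The first two terms record the boundary mismatch (the Riemann sum begins and ends at non-integer offsets), while the $O(1/(jN)^2)$ term is the smooth trapezoidal correction.

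Since the summand $\frac{4p|M_1|j}{\pi^2}f_2(j)$ depends only on $j$ and the sets $\{J_i^x\}_i$ partition the full range of $j$, the double sum collapses into $\sum_{j}\frac{4p|M_1|j}{\pi^2}f_2(j)$. Using $\frac{j}{j+1}=1-\frac{1}{j+1}$ yields
\[
j f_2(j) = \frac{1-\alpha_j}{N} - \frac{1-\alpha_{j+1}}{N} + \frac{1-\alpha_{j+1}}{(j+1)N} + O\!\left(\frac{1}{jN^2}\right).
\]
Summing over $j\in[1,|M_2|/2]$, the first two terms telescope to $O(1/N)$, the third contributes $\frac{1}{N}\sum_j\frac{1}{j+1}=O(\log|M_2|/N)$, and the error term contributes $O(\log|M_2|/N^2)$. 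Treating $j<0$ by the same argument and multiplying by the prefactor $\frac{4p|M_1|}{\pi^2}$ (so that $\frac{1}{N} = \frac{|M_2|}{p}$) produces the claimed bound $O(|M_1||M_2|\log|M_2|)$.

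The main obstacle is the Euler--Maclaurin bookkeeping in the second step. A naive triangle inequality gives only $|f_2(j)| = O(1/(jN))$, which would inflate the total to order $|M_1||M_2|^2 \gg |M_1||M_2|\log|M_2|$. It is therefore crucial to keep $f_2(j)$ \emph{signed}, to isolate the leading boundary contribution as a backward difference in $j$, and to exploit telescoping before taking absolute values---this is precisely why the proposition omits absolute values around $f_2(j)$.
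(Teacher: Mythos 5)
Your argument is correct in substance, but it takes a genuinely different route from the paper's. Writing $N=p/|M_2|$, the paper fixes $j$, sets $\epsilon_j=1-\{pj/|M_2|\}$, compares each term $\frac{j}{t}$ with $j\log\frac{t+1-\epsilon_j}{t-\epsilon_j}$, sums the per-term errors over $t$, and concludes a per-$j$ bound $\frac{4p|M_1|j}{\pi^2}f_2(j)=O(|M_1||M_2|j^{-1})$, after which summing absolute values over $j\leq|M_2|$ yields the logarithm. You instead keep $f_2(j)$ signed, make the endpoint (fractional-part) contributions explicit via Euler--Maclaurin, and get the bound from telescoping in $j$ plus the harmonic tail $\frac{1}{N}\sum_j\frac{1}{j+1}$. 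Your route buys something real: the telescoped sum $\sum_t\log\frac{t+1-\epsilon_j}{t-\epsilon_j}$ equals $\log\bigl((\lfloor(j+1)N\rfloor+1-\epsilon_j)/(jN)\bigr)$, not $\log\frac{j+1}{j}$, and the mismatch is of size $\asymp 1/N$ uniformly in $j$; multiplied by $\frac{4p|M_1|j}{\pi^2}$ it is $\asymp|M_1||M_2|$ with no decay in $j$, so it cannot be absorbed into a pointwise $O(|M_1||M_2|/j)$ estimate --- indeed your expansion shows $|f_2(j)|\asymp 1/(jN)$ for generic $j$, with an oscillating fractional-part coefficient. That boundary term is exactly the backward difference your telescoping cancels, so the signed treatment is the more robust one and supplies the cancellation that the paper's term-by-term comparison leaves implicit. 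Two minor repairs: with the trapezoid correction the expansion is $f_2(j)=\frac{\frac12-\alpha_j}{jN}-\frac{\frac12-\alpha_{j+1}}{(j+1)N}+O((jN)^{-2})$, $\alpha_k=\lceil kN\rceil-kN$, rather than with constants $1-\alpha_k$; the discrepancy $\frac{1}{2jN}-\frac{1}{2(j+1)N}$ is itself a telescoping pair, so nothing downstream changes. Also note that $(j+1)N\in\ZZ$ cannot occur for the relevant $j$ since $(p,|M_2|)=1$, and that the finitely many truncated intervals at the ends of the $j$-range (near $s\sim-n$) contribute $O(|M_1||M_2|)$ each, which is harmless.
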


\begin{proof}	
	Suppose $\{\frac{p}{|M_2|}\}=\delta$, where $\{x\}=x-\lfloor x\rfloor$. Let $\{\frac{py_j}{|M_2|}\}=1-\epsilon=1-\epsilon_j$, then for a given $t\in\ZZ$,
\[
\begin{split}
	\frac{j}{t}-j\log(\frac{t+1-\epsilon}{t-\epsilon})&=\frac{j}{t}-y_j\log(1+\frac{1}{t-\epsilon})\\
					&=\frac{j}{t}-j(\frac{1}{t-\epsilon}-\frac{1}{2(t-\epsilon)^2}+O(t^{-3}))\\
					&=\frac{-\epsilon j}{t(t-\epsilon)}+\frac{j}{2(t-\epsilon)^2}+O(t^{-3}j).
\end{split}
\]
	Summing over $t$ from $\lceil\frac{pj}{|M_2|}\rceil$ to $\lfloor\frac{py(j+1)}{|M_2|}\rfloor$, we have
\[
\begin{split}
	\sum_{t=\lceil\frac{pj}{|M_2|}\rceil}^{\lfloor\frac{p(j+1)}{|M_2|}\rfloor}\frac{-j\epsilon}{t(t-\epsilon)}&=\sum_{t=\lceil\frac{pj}{|M_2|}\rceil}^{\lfloor\frac{p(j+1)}{|M_2|}\rfloor}y_j(\frac{1}{t}-\frac{1}{t-\epsilon})\\
	&\sim j\bigg(\log\frac{\lceil \frac{p(j+1)}{|M_2|}\rceil}{\lceil\frac{pj}{|M_2|}}-\log\frac{\frac{p(j+1)}{|M_2|}+(1-2\epsilon-\delta)}{\frac{pj}{|M_2|}}\bigg)\\
	&=j\bigg(\log\frac{(\frac{p(j+1)}{|M_2|}+(1-\epsilon-\delta))\frac{pj}{|M_2|}}{(\frac{pj}{|M_2|}+\epsilon)(\frac{p(j+1)}{|M_2|}+(1-2\epsilon-\delta))}\bigg)\\
	&=j\bigg(\log\bigg(1+\frac{\epsilon\frac{pj}{|M_2|}-\epsilon\frac{p(j+1)}{|M_2|}+O(1)}{(\frac{pj}{|M_2|}-\epsilon)(\frac{p(j+1)}{|M_2|}+(1-2\epsilon-\delta))}\bigg)\bigg)\\
	&=j\frac{\epsilon\frac{pj}{|M_2|}-\epsilon\frac{p(j+1)}{|M_2|}}{(\frac{pj}{|M_2|}+\epsilon)(\frac{p(j+1)}{|M_2|}+(1-2\epsilon-\delta))}+O(\frac{|M_2|^2}{p^2}j^{-1})\\
	&=\frac{-\epsilon\frac{pj}{|M_2|}}{(\frac{pj}{|M_2|}+\epsilon)(\frac{p(j+1)}{|M_2|}+(1-2\epsilon-\delta))}+O(\frac{|M_2|^2}{p^2}j^{-1})\\
	&=O(|M_2|p^{-1}j^{-1})+O(|M_2|^2p^{-2}j^{-1}).
\end{split}
\]
	The other term can be obtained similarly. Now, 
\[
	\sum_{j=1}^{|M_2|}\frac{4p|M_1|j}{\pi^2}f_2(j)=\sum_{j=1}^{|M_2|}O(|M_1||M_2|j^{-1})=O(|M_1||M_2|\log |M_2|).
\]
\end{proof}


\subsection{Estimates for $f_1$}

\begin{proposition}
	$\sum_{i\geq p^\epsilon}\sum_{j\in J_i^x}|M_1||M_2|f_1(j)=O(|M_2|^2)=O(p)$ if $|M_1|$ is even.
\end{proposition}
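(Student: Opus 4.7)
The plan is to identify $f_1(j)$ as the first difference of a bounded sequence and then collapse the double sum by two rounds of telescoping.

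From the derivation preceding \eqref{first_extract}, $f_1(j)$ is defined so that $\sum_{s\in I_j^y}1=(\tilde{y}_{j+1}-\tilde{y}_j)+f_1(j)$, i.e., it measures the discrepancy between the lattice-point count in $I_j^y$ and the real length $p/|M_2|$. Treating $I_j^y$ as the half-open interval $[\tilde{y}_j,\tilde{y}_{j+1})$, the count equals $\lceil\tilde{y}_{j+1}\rceil-\lceil\tilde{y}_j\rceil$, and a short calculation gives
\[
	f_1(j)=\{\tilde{y}_j\}-\{\tilde{y}_{j+1}\},
\]
where $\{\cdot\}$ denotes the fractional part. Thus $f_1$ is a telescoping first difference of the bounded sequence $(\{\tilde{y}_j\})_j$.

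I would next verify that the $J_i^x$'s partition the relevant $j$-range cleanly. Because $|M_1|$ divides $|M_2|$ (a consequence of $|M_2|/|M_1|$ being an integer), the left endpoints $a_i$ of $J_i^x$ differ from $i|M_2|/|M_1|$ by the same integer shift $\lceil|M_2|n/p\rceil$, so $a_{i+1}=a_i+|M_2|/|M_1|=b_i+1$. Telescoping within a single $J_i^x$ gives $\sum_{j\in J_i^x}f_1(j)=\{\tilde{y}_{a_i}\}-\{\tilde{y}_{a_{i+1}}\}$, and the adjacency $a_{i+1}=b_i+1$ makes the intermediate contributions cancel in pairs, so a second telescoping collapses the double sum to
\[
	\sum_{i\geq p^\epsilon}\sum_{j\in J_i^x}f_1(j)=\{\tilde{y}_{a_{i_0}}\}-\{\tilde{y}_{a_{i_1}+1}\}=O(1),
\]
where $[i_0,i_1]$ is the outer summation range. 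Multiplying by $|M_1||M_2|$ and using $|M_1|\leq|M_2|\leq\sqrt{p}$ gives $|M_1||M_2|\cdot O(1)=O(|M_1||M_2|)\leq O(|M_2|^2)=O(p)$.

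The main subtlety, and the place where evenness of $|M_1|$ enters, is ensuring the partition is \emph{exact}, so that no uncancelled $O(1)$ boundary residue appears at each $i$: such residues would otherwise accumulate to $O(|M_1|)$ and yield only the weaker bound $O(|M_1|^2|M_2|)$, which is insufficient. The evenness guarantees the cutoff $|M_1|/2$ is an integer and, combined with $|M_1|\mid|M_2|$, forces the ceilings in the definition of $a_i$ to shift by a uniform integer constant independent of $i$, so the two-stage telescoping is clean.
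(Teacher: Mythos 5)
There is a genuine gap: you are bounding the wrong sum. The $f_1$ term in this proposition is the piece left over from the main-term expansion in Lemma \ref{y_j_contribute}, and that expansion is only valid up to the sign $(-1)^{i+j}$ coming from $p_2^u$ changing direction on consecutive $y_j$-intervals; this is why the paper's proof immediately remarks that ``$f_1$ changes signs from one $y_j$-interval to another.'' The quantity that must actually be controlled is the alternating sum $\sum_i\sum_{j\in J_i^x}(-1)^{i+j}|M_1||M_2|f_1(j)$, not the plain one. Your identification $f_1(j)=\{\tilde{y}_j\}-\{\tilde{y}_{j+1}\}$ is fine as far as it goes, but under the alternating signs the telescoping disappears: $\sum_j(-1)^j\bigl(\{\tilde{y}_j\}-\{\tilde{y}_{j+1}\}\bigr)$ equals, up to two boundary terms, $2\sum_j(-1)^j\{pj/|M_2|\}$ over the given range, which is precisely the nontrivial cancellation problem. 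Equivalently, the paper evaluates $f_1(j)$ as taking one of two values differing by $1$ according to whether $\{pj/|M_2|\}$ lies in the critical zone $A$ of length $\delta=\{p/|M_2|\}$; the constant part cancels in the alternating sum, and what remains is an alternating count of critical-zone hits. Note also that your unsigned reading would make the statement trivially true with no parity hypothesis at all, while with the paper's explicit values $f_1(j)\in\{-1-\delta,-\delta\}$ the unsigned sum has size of order $|M_1||M_2|^2\gg p$, so the unsigned interpretation cannot be the one the paper needs.

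Consequently the two places where the real work happens are absent from your argument. First, the evenness of $|M_1|$ is not about making ceilings shift by a uniform constant; it is what makes $\frac{2|M_2|}{|M_1|}\ZZ/|M_2|\ZZ$ an additive subgroup of order $|M_1|/2$, so that the $y_j$-intervals carrying a common sign correspond to cosets, and the number of critical-zone hits differs by at most one between consecutive cosets, yielding cancellation down to $O(|M_2|/|M_1|)$ per block. Second, since the summation range excludes neighborhoods of the singularities $s=0,-n$ (the restriction $i\geq p^\epsilon$ and the cut at $s=-n$), the sums are segmented and the coset cancellation is incomplete at the ends; the paper controls these incomplete pieces by expanding the indicator of $A$ in a Fourier series on $\ZZ/|M_2|\ZZ$ and applying H\"{o}lder's inequality with Fej\'{e}r-kernel identities to get $O(|M_2|/|M_1|)$ per segment end. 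Your proposal contains neither ingredient, so it does not establish the proposition in the form it is used.
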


\begin{proof}

	Since $(p,|M_2|)=1$, we see that the fractional part of $\{py_j/|M_2|\}_{j=1}^{|M_2|}$ runs through $\{k/|M_2|\}_{k=0}^{|M_2|-1}$.
	
	We denote the fractional part of a number $x$ by $\{x\}=x-\lfloor x\rfloor$. Let $\{p/|M_2|\}=\delta$, and $\{py_j/|M_2|\}=\epsilon_j$, then 
\[
f_1(y_j)=-(1-\epsilon_j)-\{\epsilon_j+\delta\}=\left\{\begin{array}{lcl}
				-1-\delta&\text{if}& \epsilon+\delta<1\\
				-\delta&\text{if}&\epsilon+\delta\geq1
				\end{array}\right.
\]

	Without loss of generality, we may assume that $\delta\leq1/2$. Since $f_1$ changes signs from one $y_j$-interval to another, it is important to identify where $|f_1|$ attains $\delta$.

	In order to do that, we first introduce the notion of the critical zone.

\begin{definition}
	Given $\delta\leq 1/2$, the critical zone $\bar{A}\subset S^1$, the unit circle, is defined as $\bar{A}=[1-\delta,1)$. The discrete counterpart $A\subset \ZZ/|M_2|\ZZ$ is $A=\{x\in\ZZ/|M_2|\ZZ:\, \frac{x}{|M_2|}\in\bar{A}\}$.
\end{definition}

	We should note that $\{pj/|M_2|\}\in A$ if and only if $|f_1(j)|=\delta$. Thus, the problem now depends on when $\{pj/|M_2|\}$ lies in $A$ so as to account for cancellation.
	
	Now, we note that there are effectively $|M_2|/|M_1|$ $y_j$-intervals within one $x_i$-interval. Also, the corresponding $y_j$-intervals in consecutive $x_i$-intervals have different signs. In particular, $y_{j+2|M_2|/|M_1|}$-interval and $y_j$-interval have the same sign. Since we assume that $|M_1|$ is even, $\frac{2|M_2|}{|M_1|}\ZZ/|M_2|\ZZ$ is an additive subgroup of order $|M_1|/2$. Also, for any given $j$, $\{[\{py_k/|M_2|\}]\}_{k=j}^{j+2|M_2|/|M_1|-1}$ are distinct representatives of the coset.
	
	As $p$ is a unit in $\ZZ/|M_2|\ZZ$, we can replace the representatives by $\{-k\}_{k=1}^{|M_1|/2}$. Also, we see that between each coset, the number of elements inside the critical zone $A$ differs by at most $1$. Thus, the excessive parts that are not cancelled contribute at most $|M_2|/|M_1|$.
	
	For the boundary contribution of one $x_i$-interval, we see that the incomplete sums on both sides combine to represent the coset $|M_2|/|M_1|$.
	
	The argument above applies for summation over the whole group, but in our case we need to avoid the singularity at $-n$, which splits the summation range into 2 parts. Nonetheless, we shall show that the intuition still holds true even with segmented sums.

	If $\gamma=\{p/|M_2|\}<p^{-1/2+\sigma}$, then $|f_1(j)|=\delta$ for at most $p^\sigma$ times, so the contribution is $\sqrt{p}|M_2|=O(p)$.
	
	First, when we split the summation range into $2$ parts, note that since the complete summation gives at most the order of $|M_2|/|M_1|$, it suffices to estimate for one part and get the estimate of the other part by subtraction.
	
	As it suffices to estimate for the range $-n\leq s\leq p/2$, we are looking at the following quantity
\[
	I=\sum_{a\leq t\leq b}\sum_{j=0}^{\frac{2|M_2|}{|M_1|}-1}(-1)^{j}g\bigg[pj+\frac{tp|M_2|}{|M_1|}\bigg],
\]

	where $|b-a|=O(|M_1|)$, and $g=\mathbbm{1}_A:\ZZ/|M_2|\ZZ\to\RR$ is the characteristic function of $A$. Moreover, $|A|\sim\delta|M_2|$.
	
	Now,
\[
\begin{split}
	I&=\sum_{a\leq t\leq b}\sum_{j=0}^{\frac{2|M_2|}{|M_1|}-1}(-1)^{j}g[pj+\frac{tp|M_2|}{|M_1|}]\\
	&=\frac{1}{\sqrt{|M_2|}}\sum_{j}(-1)^{j}\sum_{a\leq t\leq b}\sum_{k\in\ZZ/|M_2|\ZZ}\hat{g}[k]e^{-2\pi\imath tpk/|M_1|}e^{2\pi\imath kpy_j/|M_2|}\\
	&=\frac{1}{\sqrt{|M_2|}}\sum_{k}\hat{g}[k]\bigg(\sum_{a\leq t\leq b}e^{-2\pi\imath tpk/|M_1|}\bigg)\bigg(\sum_{y_j}(-1)^{y_j}e^{2\pi\imath kpy_j/|M_2|}\bigg)\\
	&=\frac{1}{|M_2|}\sum_k\bar{C}_k\frac{\sin(\pi k|A|/|M_2|)}{\sin(\pi k/|M_2|)}\frac{\sin(\pi k(b-a+1)p/|M_1|)}{\sin(\pi pk/|M_1|)}\frac{\sin(2\pi kp/|M_1|)}{\sin(2\pi kp/|M_2|)}\sin(\pi kp/|M_2|),
\end{split}
\]
	where $|\bar{C}_k|=1$ for all $k$. Thus, by H\"{o}lder's inequality, the identity formula of the Fej\'{e}r kernel, and change of variables ($kp\mapsto l$), we see that
	
\[
\begin{split}
	|I|&\leq \frac{1}{|M_2|}\bigg(\sum_k|\frac{\sin(\pi k|A|/|M_2|)}{\sin(\pi k/|M_2|)}|^2\bigg)^{1/2}\cdot\\
	&\bigg(\sum_{l\in\ZZ/|M_2|\ZZ}\bigg|\frac{\sin(\pi l(b-a+1)/|M_1|)}{\sin(\pi l/|M_1|)}\bigg|^2\bigg|\frac{\sin(2\pi l/|M_1|)}{\sin(2\pi l/|M_2|)}\sin(\pi l/|M_1|)\bigg|^2\bigg)^{1/2}\\
	&\leq\frac{1}{|M_2|}\sqrt{|A|}\sqrt{(b-a+1)\frac{|M_2|}{|M_1|}}\frac{|M_2|}{|M_1|}\\
	&=O(\frac{|M_2|}{|M_1|}).
\end{split}
\] 
	As a result, the contribution from each ends is at most $|M_2|/|M_1|$, which concludes our proof.
\end{proof}



\section{Acknowledgement}
The author gratefully acknowledges the support from ARO Grant W911NF-17-1-0014 and Dr. John Benedetto for the invaluable advice.

\bibliographystyle{amsplain}

\bibliography{ref}

\providecommand{\bysame}{\leavevmode\hbox to3em{\hrulefill}\thinspace}
\providecommand{\MR}{\relax\ifhmode\unskip\space\fi MR }
\providecommand{\MRhref}[2]{%
  \href{http://www.ams.org/mathscinet-getitem?mr=#1}{#2}
}
\providecommand{\href}[2]{#2}
\begin{thebibliography}{10}

\bibitem{AB_MF_DM_JM_2016}
Afonso~S Bandeira, Matthew Fickus, Dustin~G Mixon, and Joel Moreira,
  \emph{Derandomizing restricted isometries via the legendre symbol},
  Constructive Approximation \textbf{43} (2016), no.~3, 409--424.

\bibitem{JB_RB_JW_2012}
John~J Benedetto, Robert~L Benedetto, and Joseph~T Woodworth, \emph{Optimal
  ambiguity functions and weil's exponential sum bound}, Journal of Fourier
  Analysis and Applications \textbf{18} (2012), no.~3, 471--487.

\bibitem{JB_2011}
Jean Bourgain, Stephen Dilworth, Kevin Ford, Sergei Konyagin, Denka Kutzarova,
  et~al., \emph{Explicit constructions of rip matrices and related problems},
  Duke Mathematical Journal \textbf{159} (2011), no.~1, 145--185.

\bibitem{EC_JR_TT_2006_1}
Emmanuel~J Candes, Justin~K Romberg, and Terence Tao, \emph{Stable signal
  recovery from incomplete and inaccurate measurements}, Communications on pure
  and applied mathematics \textbf{59} (2006), no.~8, 1207--1223.

\bibitem{EC_TT_2005}
Emmanuel~J Candes and Terence Tao, \emph{Decoding by linear programming}, IEEE
  transactions on information theory \textbf{51} (2005), no.~12, 4203--4215.

\bibitem{EC_TT_2006}
\bysame, \emph{Near-optimal signal recovery from random projections: Universal
  encoding strategies?}, IEEE transactions on information theory \textbf{52}
  (2006), no.~12, 5406--5425.

\bibitem{FC_1994}
Fan~RK Chung, \emph{Several generalizations of weil sums}, Journal of Number
  Theory \textbf{49} (1994), no.~1, 95--106.

\bibitem{JF_HI_1993}
John Friedlander and Henryk Iwaniec, \emph{Estimates for character sums},
  Proceedings of the American Mathematical Society \textbf{119} (1993), no.~2,
  365--372.

\bibitem{DM_2015}
Dustin~G Mixon, \emph{Explicit matrices with the restricted isometry property:
  Breaking the square-root bottleneck}, Compressed sensing and its applications
  (2015), 389--417.

\bibitem{WS_2006}
Wolfgang~M Schmidt, \emph{Equations over finite fields: an elementary
  approach}, vol. 536, Springer, 2006.

\bibitem{AW_1948}
Andr{\'e} Weil, \emph{On some exponential sums}, Proceedings of the National
  Academy of Sciences \textbf{34} (1948), no.~5, 204--207.

\end{thebibliography}

\end{document}